\newcommand\ignore[1]{}
\newcommand{\hsp}{\hspace{0.1in} }
\newcommand{\hspp}{\hspace{0.05in} }
\newcommand{\hsppp}{\hspace{0.02in} }
\newcommand{\snr}{ {\sf SNR}  }
\newcommand{\iid} {  {\sf{iid}} }
\newcommand{\bggi} { {\bf h}_{ {\sf iid}, \hsppp i} }
\newcommand{\bha} { {  \mathbf{h}  } }
\newcommand{\bI} {  {\mathbf{I}} }
\newtheorem{prp}{Proposition}
\newtheorem{thm}{Theorem}
\title{\Large {\bfseries{Linear Beamforming for the Spatially Correlated
MISO Broadcast Channel}}
{\vspace{-0.2in}}
\large \author{Vasanthan Raghavan$^{\dagger}$, Venugopal Veeravalli$^{\star}$,
Stephen Hanly$^{\dagger}$ \\
$^{\dagger}$The University of Melbourne, $^{\star}$University of Illinois \\
\normalsize \{vasanthan\_\_raghavan@ieee.org, vvv@illinois.edu,
hanly@unimelb.edu.au\}
%\\ \normalsize Corresponding author: Vasanthan Raghavan
%- Phone: (608) 265-3825, Fax: (608) 265-4623
}}
\begin{document}
\maketitle
\thispagestyle{empty}
\pagestyle{empty}

\begin{abstract}
\noindent
A spatially correlated broadcast setting with $M$ antennas at the
base station and $M$ users (each with a single antenna) is considered.
We assume that the users have perfect channel information about
their links and the base station has only statistical information
about each user's link. The base station employs a linear beamforming strategy
with one spatial eigen-mode allocated to each user. The goal of this
work is to understand the structure of the beamforming vectors that
maximize the ergodic sum-rate achieved by treating interference as
noise. In the $M = 2$ case, we first fix the beamforming vectors and
compute the ergodic sum-rate in closed-form as a function of the channel
statistics. We then show that the optimal beamforming vectors are the
dominant generalized eigenvectors of the covariance matrices of the
two links. It is difficult to obtain intuition on the structure of the
optimal beamforming vectors for $M > 2$ due to the complicated
nature of the sum-rate expression. Nevertheless, in the case of asymptotic
$M$, we show that the optimal beamforming vectors have to satisfy a
set of fixed-point equations. %that can be interpreted intuitively.
\end{abstract}

\section{Introduction}
The focus of this paper is on a MISO broadcast (downlink) setting
where the base station (BS) has $M$ antennas with $M$ users in the
cell, each having a single antenna. Under the assumption of perfect
channel state information (CSI) at both the ends, significant progress
has been made over the last few years on understanding optimal
signaling that achieves the
sum-capacity~\cite{caire_shamai,pramodv,jindal1,jindal2,wei_yu} as
well as the capacity region~\cite{weingarten} of the multi-antenna
broadcast channel. Though the capacity-achieving {\em dirty paper coding}
scheme is well-understood, the complexity associated with it makes it an
impractical choice. Thus, recent focus has been on a family of linear
precoding schemes~\cite{boche1,swindlehurst,wiesel} which are within a
fixed power-offset of the dirty paper coding scheme. In particular, a
linear beamforming scheme that allocates one eigen-mode to each user is
of considerable interest in standardization efforts.

More importantly, while reasonably accurate CSI can be obtained at the
users via pilot-based schemes, CSI at the BS requires either channel
reciprocity or reverse link feedback, both of which put an overwhelming
burden on the operating cost. Thus, there has been a significant
interest on understanding the information-theoretic limits of broadcast
channels under practical assumptions on CSI. In the extreme case of no
CSI at the BS, the multiplexing gain possible in the perfect CSI case ($M$)
is lost completely as it reduces to one.

The no CSI assumption is pessimistic and in practice, the channel evolves
fairly slowly on a statistical
scale and it is possible to learn the statistics of the individual links
at the BS with minimal cost. In the MISO broadcast setting with a Rayleigh
fading model for each user (zero mean complex Gaussian fading process),
the complete channel statistics are specified by the covariance matrix
of the vector channel of the user. In this context, it must be noted that
initial works assume that all the users experience fading that is
{\em independent and identically distributed} (i.i.d.) across the antennas.
That is, the covariance matrix of each user is the identity matrix. This
assumption cannot be justified in practice unless the antennas at the BS
are spaced wide apart and the scattering environment connecting the BS
with the users is rich. While the correlated case
%where all the users experience a
%common non-i.i.d.\ covariance matrix has been studied in the
has been studied in the literature~\cite{tareq,trivellato}, the general
version of the problem studied here has not received much attention.
%case where the covariance matrices are different has not received much attention.

The focus of this work is on understanding the impact of the users'
spatial statistics (their covariance matrices) on the sum-rate
performance of the linear beamforming scheme. We first study the
simplest non-trivial case of $M = 2$ and compute the sum-rate achievable
with a linear beamforming scheme under the practical assumption that
interference is treated as noise. For this, we exploit knowledge of the
structure of density function of the weighted norm of isotropically
distributed beamforming vectors~\cite{vasanth_isit_rhv}. Our sum-rate
characterization is explicit and in terms of the covariance matrices of
the two users and the beamforming vectors.

While identifying the structure of the sum-rate optimizing beamforming
vectors is a difficult problem, in general, we obtain intuition in the
low- and the high-$\snr$ extremes. In the low-$\snr$ extreme, it
is not surprising that a strategy where the BS beamforms along the
dominant eigen-mode of each user's channel is sum-rate optimal. In the high-$\snr$
extreme, a strategy where the BS beamforms to a given user along the
dominant generalized eigenvector\footnote{ \label{fn_general}
A generalized eigenvector ${\bf x}$ (with the corresponding generalized
eigenvalue $\sigma$) of a pair of matrices $\left( {\bf A}, \hsppp
{\bf B} \right)$ satisfies the relationship ${\bf A} \hsppp {\bf x} =
\sigma \hsppp {\bf B} \hsppp {\bf x}.$ In the special case where ${\bf B}$
is invertible, a generalized eigenvector of the pair $\left( {\bf A},
\hsppp {\bf B} \right)$ is an eigenvector of ${\bf B}^{-1} {\bf A}$. If
${\bf A}$ and ${\bf B}$ are also positive definite, then all the
generalized eigenvalues are also positive.} of that user's and
the other user's covariance matrices is sum-rate optimal. Intuitively
speaking, given that the BS has only
statistical information of the two links, it generates an ``effective''
covariance matrix for a particular user by statistically pre-nulling
the interference from the forward channel of the user. The sum-rate
optimal beamforming vectors are the dominant eigen-modes of these
effective covariance matrices. Solutions in terms of the generalized
eigenvectors are obtained in the perfect CSI case~\cite{wiesel,coord_bf},
but to the best of our knowledge, this solution in the statistical case
is a first.
While the generalization of this result to the $M > 2$ case is
cumbersome, simple approximations for the ergodic sum-rate in terms
of the channel statistics are provided in the asymptotics of $M$.
Based on these approximations, we show that the optimal beamforming
vectors are solutions to a set of fixed-point equations.
%, that can be interpreted intuitively.

\noindent {\bf \em Note:} Due to space constraints, the proofs of the
main statements in this paper are not provided and the logic of the main
arguments are sketched out in brief.

\section{System Setup}
\label{sec2}
We consider a broadcast setting with $M$ antennas at the
base station (BS) and $M$ users, each with a single antenna. We denote
the $M \times 1$ channel between the BS and user $i$ as $\bha_i,
\hsppp i = 1, \cdots, M$. While different multi-user communication
strategies can be considered, as motivated in the Introduction, the
focus here is on a linear beamforming scheme where the BS beamforms the
information-bearing signal $s_i$ meant for user $i$ with the $M \times 1$
unit-normed vector ${\bf w}_i$. We assume that $s_i$ is unit energy and
the BS divides its power budget of $\rho$ equally across all the users.
The received symbol $y_i$ at user $i$ is written as
\begin{eqnarray}
y_ i & = & \sqrt{ \frac{ \rho}{M} } \cdot
\bha_i^H \left( \sum_{i=1}^M {\bf w}_i s_i \right) + n_i, \hspp
i = 1, \cdots , M  \nonumber
\end{eqnarray}
where $n_i$ denotes the ${\mathcal{CN}}(0,1)$ complex Gaussian noise
added at the receiver.

%\footnote{The Rayleigh fading assumption is critical in computing the sum
%rates in closed-form in this work.}
We assume a Rayleigh fading (zero mean complex Gaussian) model for the
channel and hence, the complete spatial statistics are described by the
second-order moments of $\{ \bha_i \}$. With $M$ antennas at the BS and a
single antenna at each user, the channel $\bha_i$ of user $i$ can be
generically written as
\begin{eqnarray}
\bha_i & = & {\bf \Sigma}_{i}^{1/2} \hsppp \bggi %, \hspp i = 1, 2
%\nonumber
%\\ & = & {\bf U}_i \hsppp {\bf \Lambda}_i^{1/2} \hsppp \bggi , \hspp i = 1, 2
\label{bhai}
\end{eqnarray}
where $\bggi$ is an $M \times 1$ vector with i.i.d.\ ${\mathcal{ CN}}(0,1)$
entries and ${\bf{\Sigma}}_{i}$ is the covariance matrix corresponding
to the user $i$.
%The second equation follows by assuming an eigen-decomposition for
%${\bf \Sigma }_i$ of the form: ${\bf \Sigma }_i = {\bf U}_i {\bf \Lambda}_i
%{\bf U}_i^H$ and the statistical invariance of $\bggi$ under a unitary
%transformation.
In particular, with ${\bf \Sigma}_i = \bI_{M}$ for all users,~(\ref{bhai})
reduces to the i.i.d.\ downlink model well-studied in the literature.

The metric of interest in this work is the throughput from the BS to the
users. Under the assumption of Gaussian inputs $\{s_i \}$,
the instantaneous information-theoretic\footnote{All rate quantities
will be assumed to be in nats$/$s$/$Hz in this work.} rate, $R_i$,
achievable by user $i$ with the linear beamforming scheme and using a
mismatched decoder is given by
\begin{align}
& R_i %\nonumber \\ & {\hspace{0.05in}}
= \log \left( 1 + \frac{ \frac{\rho}{M} \cdot
| {\bf h}_i^H {\bf w}_i |^2  }
{ 1 + \frac{ \rho}{M} \cdot \sum_{j \neq i} | {\bf h}_i^H {\bf w}_j  |^2 }
\right) \nonumber \\
& %{\hspace{0.01in}}
=  \underbrace{ \log \left( 1 + \frac{\rho}{M}
\sum_{j=1}^M | {\bf h}_i^H {\bf w}_j |^2 \right)}_{I_{i, \hsppp 1} }
%\nonumber \\ & & \hsp \hsp \hsp \hsp \hsp \hsp
- \underbrace{ \log \left( 1 + \frac{\rho}{M}
 \sum_{j \neq i} | {\bf h}_i^H {\bf w}_j |^2 \right)
}_{ I_{i,\hsppp 2} }. \nonumber
\end{align}
In particular, the ergodic sum-rate achievable with the linear
beamforming scheme is given by
\begin{eqnarray}
{\cal R} \triangleq \sum_{i=1}^M E \left[ R_i \right].
\nonumber
\end{eqnarray}
%and in general, the ergodic weighted sum rate corresponding to the
%weight vector $\underline{\eta} = [\eta_1 , \cdots, \eta_M ]$ for some
%choice of $\{ \eta_i \in [0,1], \hsppp i = 1, \cdots, M \}$ is given by
%\begin{eqnarray}
%{\cal  R}_{ \underline{\eta} } \triangleq
%\sum_{i=1}^M \eta_i E \left[ R_i \right] .
%\end{eqnarray}
With the spatial correlation model assumed in~(\ref{bhai}), we can write
$I_{i,\hsppp 1}$ as
\begin{align}
& I_{i, \hsppp 1} = \log \left( 1 + \frac{\rho}{M} \cdot
\bggi^H \hsppp {\bf \Sigma}_{i}^{1/2} \left( \sum_{j=1}^M {\bf w}_j
{\bf w}_j^H \right) {\bf \Sigma}_{i}^{1/2} \hsppp \bggi \right)
\nonumber \\
& {\hspace{0.2in}}
= \log \left( 1 +  \frac{\rho}{M} \cdot \bggi^H \hsppp {\bf V}_i
\hsppp {\bf \Lambda}_i \hsppp {\bf V}_i^H \bggi \right),
\nonumber
\end{align}
where we have used the following eigen-decomposition in the second equation:
\begin{align}
& {\hspace{0.65in}}
{\bf V}_i \hsppp {\bf \Lambda}_i \hsppp {\bf V}_i^H =
{\bf \Sigma}_{i}^{1/2} \left( \sum_{j=1}^M {\bf w}_j {\bf w}_j^H \right)
{\bf \Sigma}_{i}^{1/2}
\label{eqna7}
\nonumber
\\ & {\hspace{0.1in}}
{\bf \Lambda}_i = {\sf diag}\Big( [ {\bf \Lambda}_{i, \hsppp 1},
\hspp \cdots, \hspp {\bf \Lambda}_{i, \hsppp M}] \Big), \hspp
{\bf \Lambda}_{i,\hsppp 1} \geq \cdots \geq {\bf \Lambda}_{i, \hsppp M}
\geq 0.
\nonumber
\end{align}
Similarly, we can write $I_{i, \hsppp 2}$ as
\begin{align}
& {\hspace{0.55in}}
I_{i, \hsppp 2} = \log \left( 1 +  \frac{\rho}{M} \cdot \bggi^H
\hsppp \widetilde{ {\bf V}} _i \hsppp \widetilde{ {\bf \Lambda}}_ i
\hsppp \widetilde{ {\bf V}}_i^H \bggi \right) \nonumber \\
& {\hspace{0.65in}}
\widetilde{ {\bf V}} _i \hsppp \widetilde{ {\bf \Lambda}}_ i \hsppp
\widetilde{ {\bf V}}_i^H = {\bf \Sigma}_{i}^{1/2} \left( \sum_{j \neq i}
{\bf w}_j {\bf w}_j^H \right) {\bf \Sigma}_{i}^{1/2}
%\label{eqna9}
\nonumber \\
& {\hspace{0.1in}}
\widetilde{{ \bf \Lambda}}_i = {\sf diag}\left([
\widetilde{ {\bf \Lambda}} _{i, \hsppp 1},  \hspp \cdots, \hspp
\widetilde{ {\bf \Lambda}}_{i, \hsppp M}] \right),
\hspp \widetilde{{\bf \Lambda}}_{i,\hsppp 1} \geq \cdots
\geq \widetilde{ {\bf \Lambda}} _{i, \hsppp M} \geq 0.
\nonumber
\end{align}

Towards the goal of computing the ergodic rates, we expand
${\bf h}_{\iid, \hsppp i }$ into its magnitude and directional components
as ${\bf h}_{\iid, \hsppp i } = \| {\bf h}_{\iid, \hsppp i} \| \cdot
\widetilde{ {\bf h} }_{\iid, \hsppp i}$. Note that
$\| {\bf h}_{\iid, \hsppp i} \|^2$ can be written as
\begin{eqnarray}
\| {\bf h}_{\iid, \hsppp i} \|^2 = \frac{1}{2} \sum_{j=1}^{2M} z_j^2
\nonumber
\end{eqnarray}
where $z_j^2$ is a standard (real) chi-squared random variable and
$\widetilde{ {\bf h} }_{\iid, \hsppp i}$ is a unit-normed vector that
is isotropically distributed on the surface of $M$-dimensional complex
sphere. Thus, we can rewrite $I_{i, \hsppp 1}$ and $I_{i, \hsppp 2}$ as
\begin{eqnarray}
I_{i, \hsppp 1} & = &
\log \left( 1 +  \frac{\rho}{M} \cdot \| {\bf h}_{ \iid, \hsppp i } \|^2
\cdot \widetilde{ {\bf h} }_{\iid, \hsppp i}^H \hsppp {\bf V}_i
\hsppp {\bf \Lambda}_i \hsppp {\bf V}_i^H \widetilde{ {\bf h} }_{\iid, \hsppp i}
\right) \nonumber \\
I_{i, \hsppp 2} & = &
\log \left( 1 +  \frac{\rho}{M} \cdot \| {\bf h}_{ \iid, \hsppp i } \|^2
\cdot \widetilde{ {\bf h} }_{\iid, \hsppp i}^H
\hsppp \widetilde{ {\bf V}} _i
\hsppp \widetilde{ {\bf \Lambda}}_ i \hsppp \widetilde{ {\bf V}}_i^H
\widetilde{ {\bf h} }_{\iid, \hsppp i} \right).
\nonumber
\end{eqnarray}
Further, since the magnitude and directional information of an i.i.d.\
(isotropically distributed) random vector are independent,
$E \left[I_{i,\hsppp 1} \right]$ and $E \left[I_{i,\hsppp 2} \right]$
can be written as
\begin{align}
& E\left[I_{i, \hsppp 1} \right] = E
%_{ \| {\bf h}_{\iid, \hsppp i} \| } \left[
%E_{ \widetilde{\bf h}_{\iid, \hsppp i} }
\left[  \log \left( 1 + \frac{\rho}{M} \cdot
\| {\bf h}_{ \iid, \hsppp i } \|^2 \cdot
\widetilde{ {\bf h} }_{\iid, \hsppp i}^H \hsppp
 {\bf \Lambda}_ i  \hsppp
\widetilde{ {\bf h} }_{\iid, \hsppp i} \right)
\right] %\right] \nonumber %\label{eq_ii1}
\nonumber \\
& E\left[I_{i, \hsppp 2} \right] = E
%_{ \| {\bf h}_{\iid, \hsppp i} \| , \hspp \widetilde{\bf h}_{\iid, \hsppp i} }
%_{ \| {\bf h}_{\iid, \hsppp i} \| } \left[
%E_{ \widetilde{\bf h}_{\iid, \hsppp i} }
\left[  \log \left( 1 + \frac{\rho}{M} \cdot
\| {\bf h}_{ \iid, \hsppp i } \|^2 \cdot
\widetilde{ {\bf h} }_{\iid, \hsppp i}^H \hsppp
\widetilde{ {\bf \Lambda}}_ i  \hsppp \widetilde{ {\bf h} }_{\iid, \hsppp i} \right)
\right] %\right] %\label{eq_ii2}
\nonumber
\end{align}
where we have also used the fact that a fixed\footnote{Note that the
unitary transformation is independent of the channel realization when
the beamforming vectors are chosen based on long-term statistics of the
channel.} unitary transformation of an isotropically distributed vector
on the surface of the complex sphere does not alter its distribution.

\section{Ergodic Sum-Rate: Two User Case}
\label{sec3}
The focus of this section is on computing the ergodic
information-theoretic rates in closed-form in the special case of
two users ($M = 2$). This closed-form expression will be a function
of the covariance matrices of the two users, ${\bf \Sigma}_1$
and ${\bf \Sigma}_2$, and the choice of beamforming vectors, ${\bf w}_1$
and ${\bf w}_2$. Once a closed-form expression is obtained, our goal
lies in characterizing the structure of the optimal beamforming vectors
as a function of the channel statistics and ${\sf SNR}$.

For simplicity, we assume that
\begin{eqnarray}
{\bf \Sigma}_1 = {\bf U} \hsppp {\sf diag}([\lambda_1 \hspp \lambda_2])
\hsppp {\bf U}^H, \hspp
{\bf \Sigma}_2 = \widetilde{\bf U} \hsppp {\sf diag}([\mu_1 \hspp \mu_2])
\hsppp \widetilde{\bf U}^H
\label{pdef}
\end{eqnarray}
where ${\bf U} = \left[ {\bf u}_1 ({\bf \Sigma}_1), \hspp
{\bf u}_2({\bf \Sigma}_1)  \right]$, ${\widetilde{ {\bf U}}} =
\left[ {\bf u}_1({\bf \Sigma}_2), \hspp {\bf u}_2({\bf \Sigma}_2)
\right]$, $\lambda_1 \geq \lambda_2 > 0$ and $\mu_1 \geq \mu_2 > 0$
(that is, both ${\bf \Sigma}_1$ and ${\bf \Sigma}_2$ are positive
definite). Define the condition numbers $\kappa_1$ and $\kappa_2$ as
\begin{eqnarray}
\kappa_1 \triangleq \frac{\lambda_1}{\lambda_2} \hspp \hspp {\rm and }
\hspp \hspp \kappa_2 \triangleq \frac{\mu_1}{\mu_2}.
%\nonumber
\label{qdef}
\end{eqnarray}
\begin{prp}
\label{prop_basic_rate}
The ergodic information-theoretic rate achievable at user $i$ (where
$i = 1,2$) with linear beamforming in the two user case is given by
\begin{eqnarray}
\begin{split}
& E \left[R_{i} \right] =
E \left[  I_{i, \hsppp 1} \right] - E \left[ I_{i, \hsppp 2}  \right]
%\nonumber
\\ & {\hspace{0.2in}}
=  \frac{  {\bf \Lambda}_{i, \hsppp 1} \cdot
e^{ \frac{2}{\rho {\bf \Lambda}_{i, \hsppp 1}}  }
E_1 \left( \frac{2}{\rho {\bf \Lambda}_{i, \hsppp 1}}  \right)
- {\bf \Lambda}_{i, \hsppp 2} \cdot
e^{ \frac{2}{\rho {\bf \Lambda}_{i, \hsppp 2}} }
E_1 \left( \frac{2}{\rho {\bf \Lambda}_{i, \hsppp 2}}  \right) }
{  {\bf \Lambda}_{i, \hsppp 1} - {\bf \Lambda}_{i, \hsppp 2} }
\\ & {\hspace{1.5in}} -
%\exp \left(  \frac{2}{\rho \widetilde{ {\bf \Lambda}} _{i, \hsppp 1}} \right)
%E_1 \left( \frac{2}{\rho \widetilde{ {\bf \Lambda}} _{i, \hsppp 1}}  \right)
\exp \left(  2/\rho \widetilde{ {\bf \Lambda}} _{i, \hsppp 1}  \right)
E_1 \left( 2/\rho \widetilde{ {\bf \Lambda}} _{i, \hsppp 1}  \right)
\nonumber
\end{split}
\end{eqnarray}
where $E_1(x) = \int_{x}^{\infty} \frac{e^{-t}}{t} dt$ is the exponential
integral. The corresponding eigenvalues can be written in terms of
${\bf \Sigma}_i$ and the beamforming vectors as follows:
\begin{eqnarray}
{\bf \Lambda}_{i, \hsppp 1} & = &
\frac{ A_i + B_i + \sqrt{ (A_i - B_i)^2 + 4C_i^2 }  }{2}
\nonumber \\
{\bf \Lambda}_{i, \hsppp 2} & = &
\frac{ A_i + B_i - \sqrt{ (A_i - B_i)^2 + 4C_i^2 }  }{2}
\nonumber \\
\widetilde{ {\bf \Lambda}} _{i, \hsppp 1} & = & B_i \nonumber \\
\widetilde{ {\bf \Lambda}} _{i, \hsppp 2} & = & 0 \nonumber
\end{eqnarray}
where $A_i = {\bf w}_i^H {\bf \Sigma}_i {\bf w}_i$, $B_i = {\bf w}_j^H
{\bf \Sigma}_i {\bf w}_j$ and $C_i = |{\bf w}_i^H {\bf \Sigma}_i
{\bf w}_j|$ with $j \neq i$ and $\{ i,j \} = 1,2$.
\end{prp}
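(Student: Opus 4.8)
The plan is to reduce both $E\left[I_{i,\hsppp 1}\right]$ and $E\left[I_{i,\hsppp 2}\right]$ to one-dimensional integrals against the density of a weighted sum of exponential random variables, to evaluate those integrals through a single exponential-integral identity, and to identify the eigenvalues $\{ {\bf \Lambda}_{i,\hsppp k}, \widetilde{ {\bf \Lambda}}_{i,\hsppp k} \}$ by an elementary trace/determinant computation. Starting from the reduced forms already derived in the excerpt, I would first note that $\bggi = \| \bggi \| \cdot \widetilde{ {\bf h} }_{\iid, \hsppp i}$ lets the quadratic form collapse back to $\| \bggi \|^2 \cdot \widetilde{ {\bf h} }_{\iid, \hsppp i}^H {\bf \Lambda}_i \widetilde{ {\bf h} }_{\iid, \hsppp i} = \bggi^H {\bf \Lambda}_i \bggi = \sum_{k=1}^{2} {\bf \Lambda}_{i,\hsppp k}\, |g_k|^2$, where the $g_k$ are the i.i.d.\ $\compnorm(0,1)$ entries of $\bggi$, so each $|g_k|^2$ is a unit-mean exponential. (For general $M$ one invokes the density of the weighted norm of an isotropic vector from the cited reference; for $M=2$ this elementary route is cleaner.) Hence $E\left[I_{i,\hsppp 1}\right] = E\!\left[\log\!\left(1 + \tfrac{\rho}{2} X_1\right)\right]$ with $X_1 = {\bf \Lambda}_{i,\hsppp 1}|g_1|^2 + {\bf \Lambda}_{i,\hsppp 2}|g_2|^2$, and similarly $E\left[I_{i,\hsppp 2}\right] = E\!\left[\log\!\left(1 + \tfrac{\rho}{2} X_2\right)\right]$ with $X_2 = \widetilde{ {\bf \Lambda}}_{i,\hsppp 1}|g_1|^2 + \widetilde{ {\bf \Lambda}}_{i,\hsppp 2}|g_2|^2$.

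Next I would pin down the eigenvalues. Writing $\bW = [\bw_1, \bw_2]$ so that $\sum_j \bw_j \bw_j^H = \bW \bW^H$, the matrix ${\bf \Sigma}_i^{1/2} \bW \bW^H {\bf \Sigma}_i^{1/2}$ shares its nonzero eigenvalues with the $2\times 2$ Gram matrix $\bW^H {\bf \Sigma}_i \bW$, whose $(k,l)$ entry is $\bw_k^H {\bf \Sigma}_i \bw_l$. Its trace is $A_i + B_i$ and its determinant is $A_i B_i - C_i^2$, and solving the resulting quadratic yields exactly the stated ${\bf \Lambda}_{i,\hsppp 1}$ and ${\bf \Lambda}_{i,\hsppp 2}$. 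For the interference term, ${\bf \Sigma}_i^{1/2} \bw_j \bw_j^H {\bf \Sigma}_i^{1/2}$ is rank one, so $\widetilde{ {\bf \Lambda}}_{i,\hsppp 2} = 0$ and the single nonzero eigenvalue equals the trace $\bw_j^H {\bf \Sigma}_i \bw_j = B_i = \widetilde{ {\bf \Lambda}}_{i,\hsppp 1}$.

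The analytic core is the identity
\[
\int_0^\infty \log(1 + c\,x)\, e^{-x/a}\, dx = a\, e^{1/(ca)}\, E_1\!\left(1/(ca)\right),
\]
which I would prove by one integration by parts (differentiating the logarithm) followed by the substitution $s = 1 + cx$ to recognize $E_1$. Since generically ${\bf \Lambda}_{i,\hsppp 1} \neq {\bf \Lambda}_{i,\hsppp 2}$, the convolution of the two scaled exponentials gives $X_1$ the density $\frac{1}{{\bf \Lambda}_{i,\hsppp 1} - {\bf \Lambda}_{i,\hsppp 2}}\left(e^{-x/{\bf \Lambda}_{i,\hsppp 1}} - e^{-x/{\bf \Lambda}_{i,\hsppp 2}}\right)$; applying the identity with $c = \rho/2$ to each term produces the quoted ratio for $E\left[I_{i,\hsppp 1}\right]$. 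The rank-one $X_2$ is a single exponential of mean $\widetilde{ {\bf \Lambda}}_{i,\hsppp 1}$, so $E\left[I_{i,\hsppp 2}\right] = e^{2/(\rho \widetilde{ {\bf \Lambda}}_{i,\hsppp 1})} E_1\!\left(2/(\rho \widetilde{ {\bf \Lambda}}_{i,\hsppp 1})\right)$, and subtracting gives $E\left[R_i\right]$.

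I expect the main obstacle to be bookkeeping rather than conceptual: getting the density of $X_1$ exactly right, carrying the factor $\rho/M = \rho/2$ consistently through the argument of $E_1$, and handling the degenerate case ${\bf \Lambda}_{i,\hsppp 1} \to {\bf \Lambda}_{i,\hsppp 2}$. In that limit the ratio is a removable singularity equal to $\frac{d}{d\Lambda}\big[\Lambda\, e^{2/(\rho\Lambda)} E_1(2/(\rho\Lambda))\big]$, so the closed form extends by continuity and no separate argument is needed for the coincident-eigenvalue case.
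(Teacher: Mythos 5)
Your proposal is correct, but it reaches the closed form by a genuinely different route than the paper. The paper keeps the magnitude/direction decomposition $\bha_{\iid,\hsppp i} = \| \bha_{\iid,\hsppp i}\| \cdot \widetilde{\bf h}_{\iid,\hsppp i}$, invokes the result from its reference~\cite{vasanth_isit_rhv} that for $M=2$ the weighted norm ${\bf Y} = \widetilde{\bf h}_{\iid,\hsppp i}^H {\bf \Lambda}_i \widetilde{\bf h}_{\iid,\hsppp i}$ is \emph{uniformly} distributed on $[{\bf \Lambda}_{i,\hsppp 2}, {\bf \Lambda}_{i,\hsppp 1}]$, and then evaluates the resulting double expectation (over the uniform direction density and the chi-squared magnitude) via integral tables~\cite{gradshteyn}. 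You instead undo that decomposition entirely: recombining magnitude and direction gives $\bggi^H {\bf \Lambda}_i \hsppp \bggi = \sum_k {\bf \Lambda}_{i,\hsppp k} |g_k|^2$, a weighted sum of independent unit-mean exponentials, whose density for distinct weights is the explicit two-term exponential mixture; a single integration-by-parts identity $\int_0^\infty \log(1+cx)\hsppp e^{-x/a}\hsppp dx = a\hsppp e^{1/(ca)} E_1\left(1/(ca)\right)$ then yields both $E[I_{i,\hsppp 1}]$ and $E[I_{i,\hsppp 2}]$, and your Gram-matrix trace/determinant argument supplies the eigenvalue formulas that the paper leaves as a routine computation. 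Your route is more elementary and self-contained (no external density result, no tables), handles the coincident-eigenvalue case by continuity, and in fact anticipates the paper's own general-$M$ treatment, where exactly this ``weighted sum of chi-squared terms'' (generalized chi-squared) viewpoint from~\cite{hammarwall1,hammarwall2} is used; what the paper's route buys is consistency with its broader narrative on densities of weighted norms of isotropically distributed vectors, which is the tool it reuses elsewhere.
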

\begin{proof}
Note that a closed-form computation of $E \left[ I_{i, \hsppp 1} \right]$
requires the density function of weighted norms of isotropically distributed
unit-normed vectors since
\begin{eqnarray}
E \left[I_{i, \hsppp 1} \right] =
E_{ {\bf X} } \left[ \int_{  y = {\bf \Lambda}_{i,\hsppp 2} }
^{{\bf \Lambda}_{i, \hsppp 1} }
\log \left( 1 + X y  \right) {\sf P}_i(y) dy \right] \nonumber
\end{eqnarray}
where ${\bf X}$ denotes the random variable ${\bf X} = \frac{\rho}{2} \cdot
\| {\bf h}_{ \iid, \hsppp i } \|^2$ and $X$ corresponds to a realization of
${\bf X}$. Let ${\bf Y}$ denote the random variable
\begin{eqnarray}
{\bf Y} \triangleq
\widetilde{ {\bf h} }_{\iid, \hsppp i}^H \hsppp {\bf \Lambda}_i \hsppp
\widetilde{ {\bf h} }_{\iid, \hsppp i}
= \sum_{j=1}^2 {\bf \Lambda}_{i,\hsppp j} \hspp \Big| \widetilde{
{\bf h} }_{\iid, \hsppp i}(j) \Big|^2.
\nonumber
\end{eqnarray}
The Ritz-Rayleigh relationship implies that
${\bf \Lambda}_{i,\hsppp 2}  \leq {\bf Y} \leq {\bf \Lambda}_{i,\hsppp 1}$
and the density function of ${\bf Y}$ evaluated at $y$ is denoted as
${\sf P}_i(y)$. In~\cite{vasanth_isit_rhv}, ${\sf P}_i(y)$
in the $M = 2$ case is shown to be uniform, that is,
\begin{eqnarray}
{\sf P}_i(y) = \frac{1}{ {\bf \Lambda}_{i, \hsppp 1} -
{\bf \Lambda}_{i, \hsppp 2} }, \hspp {\bf \Lambda}_{i, \hsppp 2} \leq
y \leq {\bf \Lambda}_{i, \hsppp 1}. \nonumber
\end{eqnarray}
The statement of the proposition follows from a routine computation
via the integral tables~\cite{gradshteyn}.
\end{proof}
Note that understanding the structure of the optimal choice of beamforming
vectors, $\left( {\bf w}_{1, \hsppp {\sf opt}}, {\bf w}_{2, \hsppp {\sf opt} }
\right)$, that maximize the ergodic sum-rate as a function of
${\bf \Sigma}_1$, ${\bf \Sigma}_2$ and $\rho$ is a hard problem, in general.
Therefore, we consider the low- and the high-$\snr$ extremes to obtain
insights.

\noindent {\bf \em Low-$\snr$ Extreme:} We need the following
characterization of the exponential integral:
%~\cite[$5.1.20$, p.\ 229]{abramowitz}:
\begin{eqnarray}
\frac{1}{x+2} \leq \frac{1}{2} \log \left( 1 + \frac{2}{x} \right)
\leq E_1(x)e^{x} \leq \log \left(1 + \frac{1}{x}  \right) \leq
 \frac{1}{x}
\nonumber
\end{eqnarray}
where the extremal inequalities are established by using the fact that
$\frac{x}{x+1} \leq \log(1 + x) \leq x$. Note that the upper and the lower
bounds get tight as $x \rightarrow \infty$ (or $\rho \rightarrow 0$ in this
context). Using the above bound, we have
\begin{eqnarray}
E \left[R_i \right] \stackrel{\rho \rightarrow 0}{\rightarrow} \frac{\rho}{2}
\left( {\bf \Lambda}_{i, \hsppp 1} + {\bf \Lambda}_{i, \hsppp 2} - B_i
\right) = \frac{\rho}{2} \cdot A_i = \frac{\rho}{2} \cdot
{\bf w}_i^H {\bf \Sigma}_i {\bf w}_i. \nonumber
\end{eqnarray}
In the low-$\snr$ regime, the system is noise-limited and hence, the
linear scaling of $E \left[R_i  \right]$ with $\snr$.
It is also straightforward to note that maximizing $E \left[ R_i \right]$
is contingent on optimizing over ${\bf w}_i$ alone. Thus, %for any choice
%of $\{ \eta_1, \eta_2 \}$, ${\cal R}_{ \underline{\eta} } $
the sum-rate is maximized by
\begin{eqnarray}
{\bf w}_{1, \hsppp {\sf opt}} = {\bf u}_1( {\bf \Sigma}_1 ) \hsp
{\rm and} \hsp
{\bf w}_{2, \hsppp {\sf opt}} = {\bf u}_1( {\bf \Sigma}_2 )
\nonumber
\end{eqnarray}
where %by the definition in~(\ref{pdef}),
${\bf u}_1( {\bf \Sigma}_i)$ denotes the dominant eigenvector (an
eigenvector corresponding to the dominant eigenvalue) of ${\bf \Sigma}_i$.
In other words, in the low-$\snr$ extreme, each user signals along the
optimal statistical eigen-mode of its channel (and
ignoring the other user's channel completely). This conclusion should
not be entirely surprising. The resulting ergodic sum-rate is given as
\begin{eqnarray}
{\cal R} %R_{ \underline{\eta} }
\stackrel{\rho \rightarrow 0}{ \rightarrow}
\frac{\rho}{2} \cdot \Big[  \lambda_{\max}( {\bf \Sigma}_1) +
 \lambda_{\max} ( {\bf \Sigma}_2) \Big]. \nonumber
\end{eqnarray}

\noindent {\bf \em High-$\snr$ Extreme:} The following expansion
%~\cite[$5.1.11$, p.\ 229]{abramowitz}
of the exponential
integral is useful in characterizing ${\cal R}$ %_{ \underline{\eta} }$
as $\rho \rightarrow \infty$:
\begin{eqnarray}
E_1(x) & = & \log \left( \frac{1}{x} \right)
+ \sum_{k = 1}^{\infty} \frac{ (-1)^{k+1} x^k }{k \cdot k!} -\gamma
\nonumber
\\
& \stackrel{x \rightarrow 0}{ \rightarrow } & \log \left( \frac{1}{x} \right)
+ x - \gamma
\nonumber
\end{eqnarray}
where $\gamma \approx 0.577$ is the Euler-Mascheroni constant. Using the
above approximation, we have
\begin{eqnarray}
E \left[ R_i \right] \stackrel{ \rho \rightarrow \infty }
{ \rightarrow} \frac{  {\bf \Lambda}_{i,\hsppp 1}
\log \left(  {\bf \Lambda}_{i, \hsppp 1}  \right)
- {\bf \Lambda}_{i,\hsppp 2}
\log \left( {\bf \Lambda}_{i, \hsppp 2}  \right)  }
{ {\bf \Lambda}_{i, \hsppp 1} - {\bf \Lambda}_{i, \hsppp 2}  } -
\log \left( B_i \right).
\nonumber %\\ & \end{split}
\end{eqnarray}
The dominating impact of interference (due to the fixed nature of the
linear beamforming scheme where the beamforming vectors are not adapted
to the channel realizations) and the consequent boundedness of
$E \left[R_i \right]$ as $\snr$ increases should not be
surprising. After some elementary manipulation, we can write
$E \left[ R_i \right]$ as
\begin{eqnarray}
2 E \left[ R_i \right] & \stackrel{\rho \rightarrow \infty}{\rightarrow} &
\log \left( \frac{ A_i B_i - C_i^2} {B_i^2}  \right) +
\frac{ A_i + B_i  }{
\sqrt{ \left( A_i - B_i \right)^2 + 4C_i^2}  } \cdot
\nonumber \\
& & \log \left(  \frac{ A_i + B_i + \sqrt{ \left( A_i - B_i \right)^2 + 4C_i^2}  }
{A_i + B_i - \sqrt{ \left( A_i - B_i \right)^2 + 4C_i^2} } \right)
\nonumber \end{eqnarray}
%\\ & = &
%\frac{1}{ \sqrt{1 - d^2_{ {\bf \Sigma}_i } ( {\bf w}_1, {\bf w}_2 )  } }
%\cdot \log \left(
%\frac{ 1 +
%\sqrt{1 - d^2_{ {\bf \Sigma}_i } ( {\bf w}_1, {\bf w}_2 )  } }
%{ 1-
%\sqrt{1 - d^2_{ {\bf \Sigma}_i } ( {\bf w}_1, {\bf w}_2 )  }  }
%\right) + \log \left( \frac{ A_i B_i - C_i^2 }{ B_i^2} \right) .
%\nonumber
%\end{eqnarray}
We now rewrite the high-$\snr$ ergodic rates in a form that eases
further study.
\begin{prp}
\label{prop2}
Define $d_{ {\bf \Sigma}_i } \left(
{\bf w}_1, {\bf w}_2 \right)$ between two unit-normed vectors
${\bf w}_1$ and ${\bf w}_2$ from the Grassmann manifold ${\cal G}(2,1)$
as
\begin{eqnarray}
d_{ {\bf \Sigma}_i } \left( {\bf w}_1, {\bf w}_2  \right) \triangleq
\sqrt{ \frac{ 4 \left(A_i B_i - C_i^2  \right) }
{ \left(A_i + B_i  \right)^2 } },
\nonumber
\end{eqnarray}
where $A_i, B_i$ and $C_i$ are as in the statement of
Prop.~\ref{prop_basic_rate}.
\begin{itemize}
\item
(a) Then, $d_{ {\bf \Sigma}_i } \left( {\bf w}_1, {\bf w}_2  \right)$
is a generalized ``distance'' semi-metric\footnote{A semi-metric satisfies
all the properties necessary for a distance metric, except the triangle
inequality.} between ${\bf w}_1$ and ${\bf w}_2$
satisfying $0 \leq d_{ {\bf \Sigma}_i } \left( \cdot, \cdot  \right)
\leq 1$.
\item
(b) We can recast the ergodic rate in terms of $d_{ {\bf \Sigma}_i }
\left( {\bf w}_1, {\bf w}_2  \right)$ as
\begin{eqnarray}
 E \left[ R_i \right] + \log(2)
\stackrel{\rho \rightarrow \infty}{ \rightarrow }
\frac{g\left(d_{ {\bf \Sigma}_i } ( {\bf w}_1, {\bf w}_2  )
\right)}{2} +  \log \left(1 + \frac{A_i}{B_i} \right)
\label{ri_highsnr}
\nonumber
\end{eqnarray}
where
\begin{eqnarray}
\label{gstruct}
\nonumber
g(z) & = &  f(z) + 2\log(z),  \\
f(z) & = & \frac{1}{\sqrt{1 - z^2}} \log \left(
\frac{1 + \sqrt{1 - z^2} }{1 - \sqrt{1 - z^2}} \right).
\label{fstruct}
\nonumber
\end{eqnarray}
\item
(c) While $f (\bullet)$ is monotonically decreasing as a function of its
argument, $g(\bullet)$ is increasing with
\begin{eqnarray}
2\log(2) = \lim_{z \rightarrow 0} g(z)
\leq & g(z) &  \leq \lim_{z \rightarrow 1} g(z) = 2 \nonumber \\
\infty = \lim_{z \rightarrow 0} f(z) \geq & f(z) & \geq
\lim_{z \rightarrow 1} f(z) = 2.
\nonumber \end{eqnarray}
\end{itemize}
\endproof
\end{prp}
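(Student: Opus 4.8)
The plan is to dispatch parts (a) and (b) by linear algebra and algebraic substitution, and to concentrate the real work in part (c). Throughout I would read $A_i, B_i, C_i$ as the Gram data of the two arguments in the positive-definite inner product $\langle {\bf a}, {\bf b}\rangle_i \triangleq {\bf a}^H {\bf \Sigma}_i {\bf b}$, so that $A_i = \langle {\bf w}_i, {\bf w}_i\rangle_i$, $B_i = \langle {\bf w}_j, {\bf w}_j\rangle_i$ and $C_i = |\langle {\bf w}_i, {\bf w}_j\rangle_i|$.

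For part (a), non-negativity and the upper bound are immediate. Cauchy-Schwarz in $\langle\cdot,\cdot\rangle_i$ gives $C_i^2 \leq A_i B_i$, so the radicand of $d_{{\bf \Sigma}_i}$ is non-negative; and $4(A_i B_i - C_i^2) \leq (A_i + B_i)^2$ rearranges to $0 \leq (A_i - B_i)^2 + 4C_i^2$, a sum of squares, giving $d \leq 1$. Symmetry holds because swapping the two arguments only interchanges $A_i \leftrightarrow B_i$ while leaving $C_i$ fixed, and the defining expression is invariant under this interchange. For the identity of indiscernibles, $d = 0$ forces equality in Cauchy-Schwarz, i.e. ${\bf \Sigma}_i^{1/2}{\bf w}_i \parallel {\bf \Sigma}_i^{1/2}{\bf w}_j$; since ${\bf \Sigma}_i^{1/2}$ is invertible this means ${\bf w}_1$ and ${\bf w}_2$ coincide as points of $\mathcal{G}(2,1)$, consistent with a semi-metric (the triangle inequality is not asserted).

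For part (b), the key identity is $1 - d^2 = \frac{(A_i - B_i)^2 + 4C_i^2}{(A_i + B_i)^2}$, read off directly from the definition. Hence $\sqrt{1 - d^2} = \frac{\sqrt{(A_i - B_i)^2 + 4C_i^2}}{A_i + B_i}$, which identifies the prefactor $\frac{A_i + B_i}{\sqrt{(A_i-B_i)^2 + 4C_i^2}}$ in the high-$\snr$ expression with $(1 - d^2)^{-1/2}$ and the log-argument with $\frac{1 + \sqrt{1-d^2}}{1 - \sqrt{1-d^2}}$, so that the second term is exactly $f(d)$. For the first term I substitute $A_i B_i - C_i^2 = \frac{d^2 (A_i+B_i)^2}{4}$, giving $\log\frac{A_i B_i - C_i^2}{B_i^2} = 2\log d - 2\log 2 + 2\log(1 + A_i/B_i)$. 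Summing and halving the displayed identity for $2E[R_i]$ yields $E[R_i] + \log 2 \to \frac{1}{2}(f(d) + 2\log d) + \log(1 + A_i/B_i) = \frac{g(d)}{2} + \log(1 + A_i/B_i)$, as claimed.

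For part (c), I would pass to $u = \sqrt{1 - z^2} \in (0,1)$, so $z$ increasing corresponds to $u$ decreasing. Using $\log\frac{1+u}{1-u} = 2\,\mathrm{arctanh}(u) = 2\sum_{k\geq 0}\frac{u^{2k+1}}{2k+1}$ gives $f = 2\sum_{k\geq 0}\frac{u^{2k}}{2k+1}$, a series with strictly positive coefficients, hence strictly increasing in $u$ and so strictly decreasing in $z$, with $f\to 2$ as $u\to 0$ and $f\to\infty$ as $u\to 1$. Combining with $2\log z = \log(1 - u^2) = -\sum_{k\geq 1}\frac{u^{2k}}{k}$ produces the clean form $g = 2 - \sum_{k\geq 1}\frac{u^{2k}}{k(2k+1)}$, whose coefficients are all negative, so $g$ is strictly decreasing in $u$, i.e. strictly increasing in $z$. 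The value $\lim_{z\to 1} g = 2$ is immediate; for $\lim_{z\to 0} g$ (i.e. $u\to 1^-$) Abel's theorem applies since $\sum_{k\geq 1}\frac{1}{k(2k+1)}$ converges, and partial fractions $\frac{1}{k(2k+1)} = \frac{1}{k} - \frac{2}{2k+1}$ followed by telescoping harmonic partial sums gives $\sum_{k\geq 1}\frac{1}{k(2k+1)} = 2(1 - \log 2)$, so $\lim_{z\to 0} g = 2\log 2$. I expect part (c) to be the only real obstacle: as $z\to 0$, $g$ is the near-cancellation of $f\to\infty$ against $2\log z\to -\infty$, so neither its monotonicity nor its finite limit $2\log 2$ is visible without first finding the $u$-substitution and the series rearrangement above, which simultaneously expose the sign of every coefficient and reduce the boundary value to one classical series.
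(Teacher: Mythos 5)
Your proof is correct in all three parts, but there is nothing in the paper to compare it against: the authors state this proposition with no proof at all (consistent with their note that proofs are omitted for space), so your write-up fills an actual gap rather than paralleling an existing argument. Parts (a) and (b) go through exactly as you say: reading $A_i, B_i, C_i$ as Gram data in the inner product ${\bf a}^H {\bf \Sigma}_i {\bf b}$, Cauchy--Schwarz gives $C_i^2 \leq A_i B_i$, the rearrangement $(A_i+B_i)^2 - 4\left(A_iB_i - C_i^2\right) = (A_i - B_i)^2 + 4C_i^2 \geq 0$ gives $d \leq 1$, positive definiteness of ${\bf \Sigma}_i$ gives the identity of indiscernibles on ${\cal G}(2,1)$, and the same rearrangement, in the form $\sqrt{1-d^2} = \sqrt{(A_i-B_i)^2+4C_i^2}\big/(A_i+B_i)$, maps the paper's displayed high-$\snr$ limit of $2E\left[R_i\right]$ term by term onto $f(d) + 2\log d - 2\log 2 + 2\log\left(1 + A_i/B_i\right)$, which is the claim of (b). The real content is (c), and your device there is sound and efficient: with $u = \sqrt{1-z^2}$ one gets $f = 2\sum_{k\geq 0} \frac{u^{2k}}{2k+1}$ (positive coefficients, hence $f$ decreasing in $z$) and $g = 2 - \sum_{k\geq 1}\frac{u^{2k}}{k(2k+1)}$ (negative coefficients, hence $g$ increasing in $z$), so a single series expansion settles both monotonicity claims, gives the limits $f \to 2$ and $g \to 2$ at $z = 1$ for free, and reduces the delicate limit $\lim_{z\to 0} g(z)$ --- which on its face is a cancellation of $f \to \infty$ against $2\log z \to -\infty$ --- to the convergent sum $\sum_{k\geq 1}\frac{1}{k(2k+1)}$, legitimately evaluated at the boundary by Abel's theorem. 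Your evaluation of that sum is also correct: partial fractions give $\frac{1}{k}-\frac{2}{2k+1}$, the partial sums telescope against harmonic numbers as $2H_N - 2H_{2N+1} + 2 \to 2 - 2\log 2$, hence $\lim_{z \to 0} g(z) = 2\log 2$, matching the stated bound $2\log 2 \leq g \leq 2$. The only alternative route for (c) would be direct differentiation of $f$ and $g$, which is substantially messier; your argument is complete as written.
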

%{\vspace{0.1in}}
%\begin{proof}
%See Appendix~\ref{app_prop2}.
%\end{proof}

We are now prepared to illustrate the structure of the optimal beamforming
vectors.
\begin{thm}
\label{prop1}
%We first consider the case where either i) $\eta_1 = 1$, $\eta_2 = 0$ or
%ii) $\eta_1 = 0$, $\eta_2 = 1$.
The optimal choice of the pair $\left( {\bf w}_{1, \hsppp {\sf opt}},
{\bf w}_{2, \hsppp {\sf opt}} \right)$ that maximizes
$E \left[  R_i \right]$ in the high-$\snr$ regime is
\begin{eqnarray}
{\bf w}_{i, \hsppp {\sf opt}} = e^{j \nu_1} \hsppp {\bf u}_1
\left( {\bf \Sigma}_i  \right)
\hsp {\rm and} \hsp {\bf w}_{j, \hsppp {\sf opt}} = e^{j \nu_2}
\hsppp {\bf u}_2 \left( {\bf \Sigma}_i  \right), \hspp j \neq i
\label{opt_prop3}
\nonumber
\end{eqnarray}
for some choice of $\nu_i \in [0, 2\pi), \hsppp i = 1,2$.
\end{thm}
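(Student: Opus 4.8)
The plan is to collapse the high-$\snr$ rate to a function of the two eigenvalues of a single matrix and then win by monotonicity. Starting from the high-$\snr$ form obtained just before Prop.~\ref{prop2},
\[
E\left[R_i\right] \to \frac{ {\bf \Lambda}_{i,1}\log {\bf \Lambda}_{i,1} - {\bf \Lambda}_{i,2}\log {\bf \Lambda}_{i,2} }{ {\bf \Lambda}_{i,1} - {\bf \Lambda}_{i,2} } - \log B_i ,
\]
I would absorb the $-\log B_i$ term by normalizing. Setting $r_k = {\bf \Lambda}_{i,k}/B_i$ (the eigenvalues of $B_i^{-1}{\bf \Sigma}_i^{1/2}({\bf w}_i{\bf w}_i^H + {\bf w}_j{\bf w}_j^H){\bf \Sigma}_i^{1/2}$), a one-line manipulation gives $E[R_i]\to h(r_1,r_2)$ with
\[
h(r_1,r_2) = \frac{r_1\log r_1 - r_2\log r_2}{r_1 - r_2} = \frac{1}{r_1 - r_2}\int_{r_2}^{r_1}\left(1 + \log r\right)\,dr .
\]
The integral form makes the key fact transparent: $h$ is the average of the increasing function $1+\log r$ over $[r_2,r_1]$, hence is symmetric and strictly increasing in each of $r_1,r_2$. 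So maximizing $E[R_i]$ is exactly the problem of pushing both normalized eigenvalues up as far as the geometry of $({\bf w}_i,{\bf w}_j)$ permits.

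Next I would pin down the feasible region through the trace and determinant of this $2\times 2$ matrix. For the trace, $r_1+r_2 = 1 + A_i/B_i$, and since $A_i = {\bf w}_i^H{\bf \Sigma}_i{\bf w}_i \le \lambda_{\max}({\bf \Sigma}_i)$ while $B_i = {\bf w}_j^H{\bf \Sigma}_i{\bf w}_j \ge \lambda_{\min}({\bf \Sigma}_i)$, we obtain $r_1+r_2 \le 1+\kappa_i$ with $\kappa_i = \lambda_{\max}({\bf \Sigma}_i)/\lambda_{\min}({\bf \Sigma}_i)$. For the determinant, writing ${\bf W}=[{\bf w}_i\ {\bf w}_j]$ and using $\det({\bf W}^H{\bf \Sigma}_i{\bf W}) = |\det {\bf W}|^2\det{\bf \Sigma}_i = (1 - |{\bf w}_i^H{\bf w}_j|^2)\det{\bf \Sigma}_i$, one finds $r_1 r_2 = (A_iB_i - C_i^2)/B_i^2 \le \det{\bf \Sigma}_i/\lambda_{\min}({\bf \Sigma}_i)^2 = \kappa_i$. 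Both bounds are attained \emph{simultaneously}, and only when ${\bf w}_i = e^{j\nu_1}{\bf u}_1({\bf \Sigma}_i)$ and ${\bf w}_j = e^{j\nu_2}{\bf u}_2({\bf \Sigma}_i)$ (forcing $A_i=\lambda_{\max}$, $B_i=\lambda_{\min}$, $C_i=0$), which yields the candidate point $(r_1,r_2) = (\kappa_i,1)$; the free phases $\nu_1,\nu_2$ survive because $A_i,B_i,C_i$ and $|{\bf w}_i^H{\bf w}_j|$ are phase-blind.

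It then remains to show that $(\kappa_i,1)$ maximizes $h$ over the region $\{r_1+r_2\le 1+\kappa_i,\ r_1 r_2\le\kappa_i\}$, and this is where I expect the real work. Here I would substitute $r_{1,2} = \sqrt{q}\,e^{\pm t}$, so that $q=r_1r_2$ and $t\ge0$ measures the spread, which collapses $h$ to the strikingly simple $h = 1 + \tfrac12\log q + t\coth t$. Since $t\coth t$ is increasing (equivalently $\tfrac12\sinh 2t > t$), for each fixed product $q$ the rate grows with the spread $t$, so the optimum must lie on the trace boundary $r_1+r_2 = 1+\kappa_i$; restricting to that boundary and differentiating reduces the claim to the elementary inequality $t\cosh t > \sinh t$, which forces $h$ to increase as $q\uparrow\kappa_i$. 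The two monotonicities together drive the maximizer to the corner $(r_1,r_2)=(\kappa_i,1)$, i.e. to the eigenvector alignment. The main obstacle is exactly this joint optimization: a large signal-to-leakage ratio $A_i/B_i$ favours the eigenvector pair, whereas a large ``distance'' $d_{ {\bf \Sigma}_i }$ favours a more balanced, \emph{non}-eigenvector pair, and these pull against one another along the feasible frontier. The hyperbolic substitution is what lets one prove the eigenvector alignment always wins globally, rather than merely checking it against a few candidates.
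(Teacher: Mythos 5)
Your proof is correct, and it follows a genuinely different route from the paper's. The paper reparametrizes by $(M_i, N_i) = (A_i/B_i,\ C_i/B_i)$ under the relaxed constraint $0 \le N_i^2 \le M_i \le \chi({\bf \Sigma}_i)$, writes $2E[R_i] + 2\log 2 = g\bigl(2\sqrt{M_i - N_i^2}/(M_i+1)\bigr) + 2\log(1+M_i)$, kills $N_i$ using the monotonicity of $g$ (asserted in Prop.~2(c) without proof), and then dismisses the remaining one-variable maximization over $M_i$ as ``straightforward.'' You instead work with the normalized eigenvalues $r_k = {\bf \Lambda}_{i,k}/B_i$, relax via the trace bound $r_1 + r_2 \le 1+\kappa_i$ and the determinant bound $r_1 r_2 \le \kappa_i$, and exploit the integral-mean representation of the objective together with the hyperbolic substitution $r_{1,2} = \sqrt{q}\,e^{\pm t}$; the two monotonicity facts then reduce to the elementary inequalities $\sinh 2t > 2t$ and $t\cosh t > \sinh t$. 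What your approach buys is completeness and self-containment: it supplies exactly the analysis the paper hides behind Prop.~2(c) and ``straightforward'' (indeed, $\tfrac12 g(\mathrm{sech}\,t) = t\coth t - \log\cosh t$, so your $t\coth t$ computations are the proof of the paper's unproven monotonicity claims). It is also worth emphasizing that your determinant constraint is doing real work and cannot be dropped: under the trace constraint alone, the monotonicity in $q$ would push the maximizer to the balanced point $r_1 = r_2 = (1+\kappa_i)/2$, which beats the corner $(\kappa_i,1)$ but is infeasible; the paper's relaxation avoids this through the implicit constraint $N_i^2 \ge 0$, i.e.\ $(r_1-1)(r_2-1)\le 0$, so the two relaxed regions differ slightly, yet both have their maximum at the same achievable corner. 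Two trivial slips, neither consequential: the hyperbolic identity should read $h = \tfrac12\log q + t\coth t$ (your extra $+1$ is a constant offset that changes nothing), and your ``only when'' uniqueness claim for simultaneous attainment of both bounds implicitly assumes $\lambda_1 > \lambda_2$, which is the only case where the theorem has content anyway.
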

%{\vspace{0.05in}}
\begin{proof}
Let $\chi \left( {\bf \Sigma}_i \right) =
\frac{ \lambda_{\max}( {\bf \Sigma}_i) } { \lambda_{\min}(
{\bf \Sigma}_i )}$ denote the condition number of ${\bf \Sigma}_i$. We
first note that the optimization problem over the choice of a pair
$\left( {\bf w}_1, {\bf w}_2 \right)$ that results in a corresponding
choice of $\left( A_i, B_i, C_i \right)$ can be recast in the form of a
two parameter optimization problem over $(M_i, N_i)$ with
$M_i = \frac{A_i}{B_i}$, $N_i = \frac{C_i}{B_i}$ under the constraint
that $0 \leq N_i^2 \leq M_i \leq \chi \left( {\bf \Sigma}_i \right)$.
This results in the following high-$\snr$ expression:
\begin{eqnarray}
2 E \left[ R_i \right] + 2 \log(2)
 %\stackrel{\rho \rightarrow \infty}{ \rightarrow } &
= g\left( \frac{2 \sqrt{ M_i - N_i^2} } { M_i+1} \right) +
2 \log \left(1 + M_i \right).
\nonumber
\end{eqnarray}
It is straightforward to show that the choice in the theorem
maximizes the above equation.
\end{proof}
With this choice of beamforming vectors, $d_{ {\bf \Sigma}_i }
(\cdot, \cdot)$ and $E \left[ R_i \right]$ can be written as
\begin{eqnarray}
d_{ {\bf \Sigma}_i } \left( {\bf w}_{i, \hsppp {\sf opt}},
{\bf w}_{j, \hsppp {\sf opt}} \right)
& = & \frac{2 \sqrt{ \kappa_i}}{ \kappa_i+1 } \nonumber \\
E \left[ R_i \right]
& \stackrel{\rho \rightarrow \infty}{ \rightarrow } &
\frac{\kappa_i \log(\kappa_i)}{\kappa_i - 1} ,
\nonumber
\end{eqnarray}
whereas $\lim \limits_{ \rho \rightarrow \infty} E \left[ R_j \right]$ is
dependent on how the eigenvectors of ${\bf \Sigma}_i$ are related to
${\bf \Sigma}_j$. It is also to be noted that $E\left[ R_i \right]$
increases (and $d_{ {\bf \Sigma}_i }(  \cdot, \cdot)$ decreases) as
$\kappa_i$ increases. That is, the more ill-conditioned ${\bf \Sigma}_i$
is, the larger the high-$\snr$ statistical beamforming rate asymptote
and {\em vice versa}. This should be intuitive as our goal is only to
maximize $E \left[ R_i \right]$ and the above choice
%beamforming vectors in~(\ref{opt_prop3})
achieves that goal.

We now consider the sum-rate setting %where $\eta_1 = \eta_2 = 1$,
restricted to the case where ${\bf \Sigma}_1$ and ${\bf \Sigma}_2$
have the same set of orthonormal eigenvectors. Instead of using the definitions of
${\bf \Sigma}_1$ and ${\bf \Sigma}_2$ as in~(\ref{pdef}), for simplicity,
we will assume that ${\bf U} = {\widetilde {\bf U}} = \left[ {\bf u}_1,
\hsppp {\bf u}_2 \right]$. We define $\kappa_1$ and $\kappa_2$ as
in~(\ref{qdef}).
%as $\kappa_1 = \frac{\lambda_1} {\lambda_2} $ and
%$\kappa_2 = \frac{\mu_1}{\mu_2}$.
Without loss in generality, we can also
assume that $\kappa_1 > 1$. Three possibilities arise depending on the
relationship between $1$, $\kappa_1$ and $\kappa_2$: i) $\kappa_1 > 1
\geq \kappa_2$, ii) $\kappa_1 > \kappa_2 > 1$, and iii) $\kappa_2 \geq
\kappa_1 > 1$. (Note that first case subsumes the setting where
$\mu_1 = \mu_2 = \mu$ and ${\bf \Sigma}_2 = \mu {\bf I}$.) The main
result is the following theorem.
\begin{thm}
\label{thm1}
The sum-rate is maximized by the following choice of beamforming vectors:
\begin{eqnarray}
\begin{array}{cc}
{\bf w}_{1, \hsppp {\sf opt}} = e^{j \nu_1} {\bf u}_1,
\hspp \hspp {\bf w}_{2, \hsppp {\sf opt}} =
e^{j \nu_2} {\bf u}_2 & {\rm if}
\hspp \hspp {\rm i)} \hspp \hspp {\rm or} \hspp \hspp {\rm ii)}
\hspp \hspp {\rm is} \hspp \hspp {\rm true}, \\
{\bf w}_{1, \hsppp {\sf opt}} = e^{j \nu_2} {\bf u}_2,
\hspp \hspp {\bf w}_{2, \hsppp {\sf opt}} =
e^{j \nu_1} {\bf u}_1 & {\rm if} \hspp \hspp {\rm iii)} \hspp
\hspp {\rm is} \hspp \hspp {\rm true}
\end{array}
\nonumber
\end{eqnarray}
for some choice of $\nu_i \in [0, 2\pi), \hspp i = 1,2$. The optimal
sum-rate is given as
\begin{eqnarray}
E\left[R_1 \right] + E \left[ R_2 \right] \stackrel{\rho \rightarrow
\infty} {\rightarrow} \left\{
\begin{array}{cc}
\frac{\kappa_1 \hsppp \cdot \hsppp \log \left(\kappa_1\right)}{\kappa_1-1} +
\frac{ \log \left(\kappa_2 \right)}{\kappa_2-1} & {\rm if} \hspp
\kappa_1 \geq \kappa_2 \\
\frac{\kappa_2 \hsppp \cdot \hsppp \log \left( \kappa_2 \right)}{\kappa_2-1} +
\frac{\log \left(\kappa_1 \right)} {\kappa_1-1} & {\rm if} \hspp
\kappa_1 < \kappa_2
\end{array}
\right.
\nonumber
\end{eqnarray}
\end{thm}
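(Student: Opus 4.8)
The plan is to exploit the shared eigenbasis to collapse the sum-rate maximization to a two-parameter real optimization and then show the optimum sits at a vertex corresponding to orthogonal eigen-mode beamforming. Writing ${\bf w}_1 = a_1 {\bf u}_1 + b_1 {\bf u}_2$ and ${\bf w}_2 = a_2 {\bf u}_1 + b_2 {\bf u}_2$ with $|a_k|^2 + |b_k|^2 = 1$, and setting $p = |a_1|^2$, $q = |a_2|^2$, the quantities $A_i,B_i$ of Prop.~\ref{prop_basic_rate} depend only on $(p,q)$ (e.g. $A_1 = \lambda_2 + (\lambda_1-\lambda_2)p$, $B_1 = \lambda_2 + (\lambda_1-\lambda_2)q$), while the cross terms are $C_1^2 = \lambda_1^2 r^2 + \lambda_2^2 s^2 + 2\lambda_1\lambda_2 rs\cos\psi$ and $C_2^2 = \mu_1^2 r^2 + \mu_2^2 s^2 + 2\mu_1\mu_2 rs\cos\psi$, where $r = \sqrt{pq}$, $s = \sqrt{(1-p)(1-q)}$, and $\psi$ is the relative phase of $a_1^\ast a_2$ and $b_1^\ast b_2$.

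First I would dispose of the phase. Both $C_1^2$ and $C_2^2$ are increasing in $\cos\psi$, and by Prop.~\ref{prop2}(c) the map $g(\cdot)$ is increasing while $d_{{\bf \Sigma}_i}(\cdot,\cdot)$ decreases as $C_i$ grows; hence each $E[R_i]$ is decreasing in $C_i$. Choosing $\cos\psi = -1$ therefore simultaneously minimizes $C_1$ and $C_2$ and maximizes both rates, giving $C_1 = |\lambda_1 r - \lambda_2 s|$, $C_2 = |\mu_1 r - \mu_2 s|$ and reducing the problem to $(p,q) \in [0,1]^2$. The key reduction follows from Prop.~\ref{prop2}: since $g$ is increasing and $d_{{\bf \Sigma}_i}$ is maximal at $C_i=0$, one obtains the pointwise bound $E[R_i] \le \phi(M_i)$, where $M_i = A_i/B_i$ and $\phi(M) = \frac{M\log M}{M-1}$ is the closed form of the $C_i=0$ rate (obtained by simplifying $\frac12 g(\frac{2\sqrt{M}}{M+1}) + \log(1+M) - \log 2$; note $\phi(\kappa_i)$ recovers the individual optimum $\frac{\kappa_i\log\kappa_i}{\kappa_i-1}$ reported after Thm.~\ref{prop1}). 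This bound is tight exactly when $C_i = 0$, which holds at $(p,q)=(1,0)$ and $(0,1)$, the orthogonal eigen-mode assignments. It therefore suffices to show that the surrogate $\Phi(p,q)=\phi(M_1)+\phi(M_2)$ attains its maximum over $[0,1]^2$ at one of these two vertices.

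For the surrogate I would first verify $\phi$ is strictly increasing (its derivative has numerator $M-1-\log M>0$). Setting $\nabla\Phi = 0$ and eliminating the common factor $\phi'(M_2)(\mu_1-\mu_2)$ between the two equations forces $M_1 M_2 = 1$; but on $\{M_1 M_2 = 1\}$ one has $\Phi = \phi(M_1)+\phi(1/M_1) = \frac{(M_1+1)\log M_1}{M_1-1}$, which is strictly increasing in $|\log M_1|$ with a single stationary value at $M_1=M_2=1$ (the diagonal $p=q$), where $\nabla\Phi$ is readily checked not to vanish. This rules out an interior maximizer and places the maximum on the boundary. The main obstacle is the boundary analysis: on each edge one beamformer is pinned to an eigen-mode, $M_1$ and $M_2$ become competing monotone functions of the free parameter, and since $\phi'$ is larger at the $M<1$ end the monotonicity of $\Phi$ toward the vertex is not automatic; it must be established by a direct one-dimensional comparison of $\phi'(M_1)\,|M_1'|$ against $\phi'(M_2)\,M_2'$ along each edge.

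Once each edge is shown to be maximized at a vertex, I would evaluate $\Phi$ at the two candidate vertices, obtaining $\phi(\kappa_1)+\phi(1/\kappa_2) = \frac{\kappa_1\log\kappa_1}{\kappa_1-1}+\frac{\log\kappa_2}{\kappa_2-1}$ at $(1,0)$ and $\phi(1/\kappa_1)+\phi(\kappa_2) = \frac{\log\kappa_1}{\kappa_1-1}+\frac{\kappa_2\log\kappa_2}{\kappa_2-1}$ at $(0,1)$. Using $\phi(M)-\phi(1/M)=\log M$, their difference is exactly $\log\kappa_1 - \log\kappa_2$, so the first vertex wins iff $\kappa_1 \ge \kappa_2$. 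Because $C_1=C_2=0$ at both vertices, the surrogate bound is tight there, so these values are the true optimal sum-rates; mapping the three cases i)--iii) onto the sign of $\kappa_1-\kappa_2$ then yields the stated beamformer assignments (${\bf w}_1={\bf u}_1,{\bf w}_2={\bf u}_2$ for i)/ii) and the swap for iii)) and the closed-form optimal sum-rate.
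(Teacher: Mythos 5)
Your overall route closely parallels the paper's own (sketched) proof: the same decomposition of ${\bf w}_1, {\bf w}_2$ in the common eigenbasis, the same elimination of the phases through the cross term (the paper maximizes $|\beta\gamma - \alpha\delta|$, you set $\cos\psi = -1$; both are the same observation), and the same strategy of bounding the sum-rate from above by a quantity that is attained at the claimed optimum. Your surrogate $\Phi(p,q) = \phi(M_1)+\phi(M_2)$ with $\phi(M)=\frac{M\log M}{M-1}$ is a clean instantiation of that bound: the identity $E[R_i]\big|_{C_i=0} = \phi(M_i)$ is correct, the monotonicity of $E[R_i]$ in $C_i$ (via Prop.~2) makes it a valid pointwise upper bound, the interior stationarity argument (forcing $M_1M_2=1$, which for $\kappa_1\neq\kappa_2$ is the diagonal $p=q$, where the gradient is indeed nonzero) is sound, and the vertex comparison via $\phi(M)-\phi(1/M)=\log M$ reproduces the theorem's sum-rate expressions exactly.

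The genuine gap is the edge analysis, which you yourself label ``the main obstacle'' and then leave as a declaration of what ``must be established.'' Ruling out interior critical points only pushes the maximizer of $\Phi$ to the boundary of $[0,1]^2$; the theorem's content is precisely that the boundary maximum sits at $(1,0)$ or $(0,1)$ rather than in the interior of an edge, and that is exactly the step you have not carried out. Moreover, it is not a routine monotonicity check: on the edge $p=0$ with $\kappa_1>\kappa_2$, the derivative of $\Phi(0,q)$ at $q=0^+$ has the sign of $\kappa_2-\kappa_1<0$ while it is positive near $q=1$ (e.g.\ $\kappa_1=4,\kappa_2=2$), so $\Phi$ dips and then rises along that edge; you need a single-sign-change or quasiconvexity argument there, not monotonicity toward a vertex. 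The edges can in fact be handled --- for instance on $q=0$, writing $F(x)=\frac{\log x}{x-1}$, $u=1+(\kappa_2-1)p$ and $M_1=1+(\kappa_1-1)p\ge u$, the sign of $d\Phi/dp$ equals the sign of $1+\frac1u - F(M_1)-F(u)$, which is nonnegative because $F$ is decreasing and $u-\frac1u\ge 2\log u$ --- but each edge requires its own such inequality, and none is supplied. A second, smaller omission: the boundary maximum could a priori also be at the vertices $(0,0)$ or $(1,1)$ (where the surrogate equals $2$ but is not tight, since $C_i\neq 0$ there); these must be checked to be dominated, e.g.\ via $\Phi(1,0)\ge\phi(\kappa_2)+\phi(1/\kappa_2)\ge 2$. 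Until these one-dimensional arguments are actually established, the claim that the maximum is at $(1,0)$ or $(0,1)$ --- and hence the theorem --- remains unproven.
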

{\vspace{0.05in}}
\begin{proof}
%See Appendix~\ref{app_thm1}.
The proof follows by decomposing ${\bf w}_1$ and ${\bf w}_2$ along
the obvious orthogonal basis of $\{ {\bf u}_1, {\bf u}_2 \}$:
\begin{eqnarray}
{\bf w}_1 =  \alpha {\bf u}_1 + \beta {\bf u}_2 , \hspp
{\bf w}_2 & = & \gamma {\bf u}_1 + \delta {\bf u}_2 \nonumber
\end{eqnarray}
for some choice of $\{ \alpha, \beta, \gamma, \delta \}$ with
$\alpha = |\alpha| e^{j \theta_{\alpha}}$ (similarly, for other
quantities) satisfying $| \alpha |^2 + |\beta |^2 = |\gamma|^2 +
| \delta |^2 = 1$. A direct optimization of the high-$\snr$ sum-rate
expression shows that $\{ \theta_{\bullet} \}$ enters the optimization
only via the term $| \beta \gamma - \alpha \delta |$, which can be
maximized by setting $\theta_{\alpha} + \theta_{\delta} - \theta_{\beta}
- \theta_{\gamma} = \pi$ (modulo $2\pi$). Parameterizing
$|\alpha|$ and $|\gamma|$ as $|\alpha| = \sin(\theta)$ and $|\gamma| =
\sin(\phi)$ for some $\{ \theta, \phi \} \in [0, \pi/2]$, we can
show that the sum-rate is maximized by $\theta = \pi/2$ and $\phi = 0$
if i) or ii) is true and by $\theta = 0$ and $\phi = \pi/2$ if iii)
is true.
For this, we establish an upper bound to the sum-rate and show that this
bound is achieved by the choice as in the statement of the theorem.
\end{proof}

We now consider the general case where ${\bf \Sigma}_1$ and ${\bf \Sigma}_2$
do not have the same set of eigenvectors.
\begin{thm}
\label{thm2}
In the general case, the sum-rate is maximized
\begin{eqnarray}
{\bf w}_{1, \hsppp {\sf opt}} = e^{j \nu_1} {\bf u}_1 \big(
{\bf \Sigma}_2^{-1} \hsppp {\bf \Sigma}_1 \big),
\hspp \hspp {\bf w}_{2, \hsppp {\sf opt}} =
e^{j \nu_2} {\bf u}_1 \big( {\bf \Sigma}_1^{-1} \hsppp
{\bf \Sigma}_2 \big)
\nonumber
\end{eqnarray}
for some choice of $\nu_i \in [0, 2\pi), \hspp i = 1,2$.
\end{thm}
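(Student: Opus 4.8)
The plan is to push the high-$\snr$ calculus of Prop.~\ref{prop2} through to the non-commuting case and then to pin down the maximizer by the same upper-bound-and-achieve route used for Theorem~\ref{thm1}. My first step is to rewrite the sum-rate so that the two covariances appear jointly. Collecting ${\bf w}_1,{\bf w}_2$ into ${\bf W} = [\,{\bf w}_1 \ \ {\bf w}_2\,]$ and using the Gram identity $A_i B_i - C_i^2 = \det\big({\bf W}^H {\bf \Sigma}_i {\bf W}\big) = \big(1 - \tau\big)\det{\bf \Sigma}_i$, where $\tau \triangleq |{\bf w}_1^H {\bf w}_2|^2$, the per-user high-$\snr$ rate of Prop.~\ref{prop_basic_rate}--\ref{prop2} becomes
\begin{equation}
E[R_i] \stackrel{\rho \rightarrow \infty}{\longrightarrow}
\tfrac12 f\big(d_{{\bf \Sigma}_i}\big) + \tfrac12 \log\big((1-\tau)\det{\bf \Sigma}_i\big) - \log B_i . \nonumber
\end{equation}
Thus the sum-rate depends on $({\bf w}_1,{\bf w}_2)$ only through $A_1,B_1,A_2,B_2$ and the single coupling scalar $\tau$. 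Exactly as in the proof of Theorem~\ref{thm1}, the phases enter only through $\tau$ (the traces $A_i+B_i$ and the determinants are phase-free), so I would optimize the phases out first, leaving a real optimization over the two beam directions.

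The engine of the argument is a monotonicity that follows directly from Prop.~\ref{prop2}(c): differentiating the reduced objective in $\tau$ at fixed $A_i,B_i$ and using $-z f'(z) = 2 - z g'(z) \le 2$ (since $g$ is increasing) gives $\partial \mathcal{R}/\partial\tau \le 0$, strictly so generically, i.e.\ all else equal the sum-rate prefers beams that are as separated as possible. The candidate is then identified by combining this with the desired-to-leakage tradeoff: ${\bf w}_{1} = {\bf u}_1({\bf \Sigma}_2^{-1}{\bf \Sigma}_1)$ and ${\bf w}_{2} = {\bf u}_1({\bf \Sigma}_1^{-1}{\bf \Sigma}_2)$ are the two generalized eigenvectors of $({\bf \Sigma}_1,{\bf \Sigma}_2)$, so ${\bf w}_1$ maximizes the generalized Rayleigh quotient $A_1/B_2 = ({\bf w}_1^H {\bf \Sigma}_1 {\bf w}_1)/({\bf w}_1^H {\bf \Sigma}_2 {\bf w}_1)$ --- the statistical signal-to-leakage ratio of the pre-nulled effective channel described in the Introduction --- and likewise ${\bf w}_2$ maximizes $A_2/B_1$. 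Being generalized eigenvectors they are simultaneously ${\bf \Sigma}_1$- and ${\bf \Sigma}_2$-biorthogonal, so $C_1 = C_2 = 0$ at this point; feeding ${\bf \Sigma}_1 {\bf w}_1 = \sigma_1 {\bf \Sigma}_2 {\bf w}_1$ and $C_1=C_2=0$ into the stationarity condition $\nabla_{{\bf w}_1}\mathcal{R}\parallel{\bf w}_1$ on the product of unit spheres, the cross-terms in ${\bf w}_2$ (which reach $f(d_{{\bf \Sigma}_i})$ only through $C_i^2$) drop out and the first-order condition reduces to a scalar identity; the symmetric computation handles ${\bf w}_2$.

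The hard part is promoting this critical point to the global optimum. I would mimic Theorem~\ref{thm1}: decompose ${\bf w}_1,{\bf w}_2$ in the generalized eigenbasis $\{{\bf v}_1,{\bf v}_2\}$, parametrize the two directions by angles, and after the phase reduction bound the resulting two-parameter objective, showing the bound is met only at the biorthogonal pair (the ordering $\sigma_1 \ge \sigma_2$ then fixes the assignment, and the construction collapses to the three-case count of Theorem~\ref{thm1} when the eigenvectors coincide). The obstacle is that this objective is not concave and that $\{{\bf v}_1,{\bf v}_2\}$ is not orthonormal, so the constraints $\|{\bf w}_1\| = \|{\bf w}_2\| = 1$ do not separate the problem into two independent quotient maximizations: the biorthogonal pair maximizes both Rayleigh quotients but does not minimize $\tau$ (an orthonormal pair would), so the proof must show that the generalized eigenpair is exactly the point balancing quotient maximization against the $\tau$-penalty. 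Establishing a tight upper bound through this coupling --- for which the single-combination appearance of $\tau$ and $\det{\bf \Sigma}_i$ in the identity above is essential --- is the principal technical step.
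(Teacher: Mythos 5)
Your reduction is correct as far as it goes: the Gram identity $A_iB_i - C_i^2 = \det\big({\bf W}^H{\bf \Sigma}_i{\bf W}\big) = |\det {\bf W}|^2\det{\bf \Sigma}_i = (1-\tau)\det{\bf \Sigma}_i$ is valid for $M=2$, the resulting five-variable form of the high-$\snr$ sum-rate is right, the candidate pair is indeed the pair of dominant generalized eigenvectors, the ${\bf \Sigma}_1$-/${\bf \Sigma}_2$-biorthogonality giving $C_1=C_2=0$ is a true and useful observation, and the $\tau$-monotonicity at frozen $A_i,B_i$ checks out. But everything you establish certifies only a \emph{stationary point} of a non-concave problem on the product of unit spheres, whereas the theorem is an assertion of \emph{global} optimality --- and that step you explicitly leave open as "the principal technical step." The obstruction is intrinsic to your parametrization: the five invariants $(A_1,B_1,A_2,B_2,\tau)$ are not free coordinates, their joint feasible set as $({\bf w}_1,{\bf w}_2)$ ranges over unit vectors is exactly what would have to be characterized, and at your candidate the two monotonicity mechanisms pull against each other (the generalized eigenvectors maximize the quotients $A_1/B_2$ and $A_2/B_1$ but, being non-orthogonal, have $\tau>0$, while an orthonormal pair attains $\tau=0$). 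A further slip: the claim that "the phases enter only through $\tau$" does not transfer from Theorem~\ref{thm1}. In the general case no \emph{orthonormal} basis diagonalizes both ${\bf \Sigma}_1$ and ${\bf \Sigma}_2$, and in the non-orthogonal generalized eigenbasis the unit-norm normalizations $\|\alpha{\bf x}_1+\beta{\bf x}_2\|$ contain the cross term $2\,{\rm Re}(\alpha^*\beta\,{\bf x}_1^H{\bf x}_2)$, so the component phases leak into $A_i$ and $B_i$ through the denominators; the clean phase elimination of Theorem~\ref{thm1} is not available verbatim.

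The paper's proof sidesteps precisely this coupling by a whitening change of variables rather than by a direct variational argument. It defines ${\bf \Sigma} = {\bf \Sigma}_2^{-1/2}{\bf \Sigma}_1{\bf \Sigma}_2^{-1/2} = {\bf V}\,{\sf diag}([\eta_1 \ \eta_2])\,{\bf V}^H$, decomposes ${\bf w}_1,{\bf w}_2$ along the non-orthogonal basis ${\bf \Sigma}_2^{-1/2}{\bf v}_1,\,{\bf \Sigma}_2^{-1/2}{\bf v}_2$ (your generalized eigenvectors) with explicit normalizing denominators, and then performs a coordinate transformation under which the optimization becomes an instance of the commuting-covariance problem of Theorem~\ref{thm1}; the upper-bound-and-achieve argument is then inherited from that theorem. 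In effect, congruence by ${\bf \Sigma}_2^{-1/2}$ converts "simultaneously diagonalizable by a non-orthogonal basis" into "common orthonormal eigenbasis," which is exactly the device your sketch lacks. If you wish to complete your route instead, the missing lemma is an upper bound on the sum-rate over the feasible set of $(A_1,B_1,A_2,B_2,\tau)$ that is tight only at the biorthogonal pair; producing such a bound essentially forces the same change of variables, so as written your proposal is an appealing reformulation plus a verified candidate, not a proof.
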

\begin{proof}
For this case, we define ${\bf \Sigma}$ and its corresponding
eigen-decomposition as
\begin{eqnarray}
{\bf \Sigma} \triangleq {\bf \Sigma}_2^{-\frac{1}{2}} \hsppp
{\bf \Sigma}_1 \hsppp {\bf \Sigma}_2^{-\frac{1}{2}} =
{\bf V} \hspp {\sf diag}\left([ \eta_1 \hspp \eta_2  ]\right) \hspp
{\bf V}^H \nonumber
\end{eqnarray}
where ${\bf V} = \left[ {\bf v}_1 \hspp {\bf v}_2 \right]$ and
$\eta_1 \geq \eta_2$. Since ${\bf \Sigma}_2$ is a full rank matrix and
$\{ {\bf v}_1, {\bf v}_2 \}$ form a basis, the vectors
${\bf \Sigma}_2^{-\frac{1}{2} } \hsppp {\bf v}_1$ and
${\bf \Sigma}_2^{-\frac{1}{2} } \hsppp {\bf v}_2$ also form a basis (albeit
non-orthogonal, in general). We can decompose ${\bf w}_1$ and ${\bf w}_2$
along these vectors as
\begin{eqnarray}
{\bf w}_1 = \frac{
\alpha {\bf \Sigma}_2^{-\frac{1}{2} } {\bf v}_1 +
\beta{\bf \Sigma}_2^{-\frac{1}{2} } {\bf v}_2 }
{ \| \alpha{\bf \Sigma}_2^{-\frac{1}{2} }  {\bf v}_1 +
\beta {\bf \Sigma}_2^{-\frac{1}{2} }  {\bf v}_2 \| },
\hsppp {\bf w}_2 =
\frac{ \gamma {\bf \Sigma}_2^{-\frac{1}{2} } {\bf v}_1 +
\delta {\bf \Sigma}_2^{-\frac{1}{2} }  {\bf v}_2}
{ \| \gamma {\bf \Sigma}_2^{-\frac{1}{2} } {\bf v}_1 +
\delta {\bf \Sigma}_2^{-\frac{1}{2} }  {\bf v}_2 \| }
\nonumber
\end{eqnarray}
for some choice of $\{ \alpha, \beta, \gamma, \delta \}$ with
$\alpha = |\alpha| e^{j \theta_{\alpha}}$ (similarly, for other
quantities) satisfying $| \alpha |^2 + |\beta |^2 = |\gamma|^2 +
| \delta |^2 = 1$. A suitable coordinate transformation at this stage
results in an optimization problem that is related to the special case
of Theorem~\ref{thm1}. After this transformation, the proof follows
along the same logic as in Theorem~\ref{thm1}.
\end{proof}
The reason for the peculiar choice of decomposition in the above proof
(instead of decomposing the beamforming vectors along $\{ {\bf v}_1,
{\bf v}_2 \}$) is that ${\bf \Sigma}_2^{-\frac{1}{2} } \hsppp
{\bf v}_i, \hsppp i = 1,2$ turn out to be the dominant generalized
eigenvectors of the pairs $\left( {\bf \Sigma}_1, \hsppp {\bf \Sigma}_2
\right)$ and $\left( {\bf \Sigma}_2, \hsppp {\bf \Sigma}_1 \right)$,
respectively. Recall from Footnote~\ref{fn_general} the definition of a
generalized eigenvector. For the above claim, note that
\begin{eqnarray}
{\bf \Sigma}_2^{-1} \hsppp {\bf \Sigma}_1 & = &  {\bf \Sigma}^{ - \frac{1}{2} }
\hsppp \Big( {\bf V} \hspp {\sf diag}\left([ \eta_1 \hspp \eta_2  ]\right)
\hspp {\bf V}^H \Big) \hsppp {\bf \Sigma}_2^{\frac{1}{2}}
= {\bf M} \hsppp {\bf D} \hsppp {\bf M}^{-1} \nonumber \\
{\bf \Sigma}_1^{-1} \hsppp {\bf \Sigma}_2 & = &
\Big({\bf \Sigma}_2^{-1} \hsppp {\bf \Sigma}_1  \Big)^{-1} =
{\bf M} \hsppp {\bf D}^{-1} \hsppp {\bf M}^{-1} \nonumber
\end{eqnarray}
where ${\bf M} = {\bf \Sigma}_2^{-\frac{1}{2} } \hsppp {\bf V}$ and
${\bf D} = {\sf diag}\left([ \eta_1 \hspp \eta_2  ]\right)$.
Theorem~\ref{thm1} is indeed a special case of Theorem~\ref{thm2}. For
this, note that the dominant eigenvector of ${\bf \Sigma}_2^{-1}
\hsppp {\bf \Sigma}_1$ is ${\bf u}_1$ and ${\bf u}_2$ when $\kappa_1 >
\kappa_2$ and $\kappa_2 < \kappa_1$, respectively.

\section{Ergodic Sum-Rate: General $M$ Case}
A recent advance~\cite{hammarwall1,hammarwall2} allows a computation
of the density function of weighted sum of standard central chi-squared
terms ({\em generalized} chi-squared random variables). Alternate to the
approach of Prop.~\ref{prop1},
%weighted norm of isotropically distributed beamforming vectors,
this approach allows closed-form expressions in the general $M$
case. For example, if ${\bf \Lambda}_i(j), \hsppp j = 1,
\cdots, M$ are distinct\footnote{More complicated expressions can be
obtained in case $\{ {\bf \Lambda}_i(j) \}$ are not distinct.}, we have
\begin{align}
& E \left[ I_{i, \hsppp 1} \right] =
\sum_{k=1}^M \prod_{j = 1, \hsppp j \neq k}^M \frac{ {\bf \Lambda}_i(k)}
{  {\bf \Lambda}_i(k) - {\bf\Lambda}_i(j)} \cdot x_k
%\label{rdef1}
\\
& {\hspace{0.1in}}
x_k = \exp \left( \frac{\rho}{ {\bf \Lambda}_i(k) M} \right)
E_1 \left( \frac{\rho}{ {\bf \Lambda}_i(k) M} \right).
%\label{gen_form}
\nonumber
\end{align}
For $E \left[  I_{i, \hsppp 2} \right]$, replace ${\bf \Lambda}_i$ by
${\widetilde{ \bf \Lambda}}_i$. It can be checked that this expression
matches with the expression in the $M = 2$ case.

%\subsection{Asymptotic Analysis}
Nevertheless, it is important to note that the formula
%in~(\ref{three_user_formula}) and
%in~(\ref{rdef1})
above is in terms of the eigenvalue matrices
$\{ {\bf \Lambda}_i, {\widetilde{ \bf \Lambda}}_i, \hsppp i = 1,
\cdots, M \}$, which become harder (and impossible for $M \geq 5$)
to compute in closed-form as a function of the beamforming vectors
and the covariance matrices as $M$ increases. Approximation
to the generalized chi-squared random variable by a Gamma distribution
with matching first two moments %~\cite{nasa_report}
can also be used to produce sum-rate approximations. However, these
approximations are of similar complexity as the above formula.
%as~(\ref{rdef1}).
In contrast, we now provide asymptotic approximations to the sum-rate
directly in terms of the relevant variables.
%We now study the setting where $M \rightarrow \infty$ and obtain an
%expression for the sum-rate.
\begin{prp}
\label{prop_asy}
For any fixed $\rho$, the ergodic information-theoretic rate
achievable at user $i$ (where $i = 1, \cdots, M$) converges as
$M \rightarrow \infty$ to
\begin{eqnarray}
E \left[ R_i \right] & \rightarrow & \log \left( 1 + {\sf SINR}_i
\right) \triangleq
{\cal R}_{i, \hsppp \infty}
\nonumber \\
{\sf SINR}_i & = &
\frac{ \frac{\rho}{M} \cdot {\bf w}_i^H {\bf \Sigma}_i {\bf w}_i }
{1  + \frac{\rho}{M} \cdot \sum_{j = 1, \hsppp j \neq i} ^M
{\bf w}_j^H {\bf \Sigma}_i {\bf w}_j  } \triangleq
\frac{ {\sf S}_i } { {\sf I}_i}. \nonumber
\end{eqnarray}
\end{prp}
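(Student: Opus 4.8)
The plan is to handle the two expectations $E\left[I_{i,\hsppp 1}\right]$ and $E\left[I_{i,\hsppp 2}\right]$ separately and to show that, as $M\to\infty$, each converges to the logarithm evaluated at the \emph{mean} of its underlying quadratic form; the difference of the two limits then collapses to $\log(1+{\sf S}_i/{\sf I}_i)$ exactly. Writing ${\bf Q}\triangleq\sum_{j=1}^M {\bf w}_j{\bf w}_j^H$ and $\widetilde{\bf Q}_i\triangleq\sum_{j\neq i}{\bf w}_j{\bf w}_j^H$, I would use the representation from Section~\ref{sec2}, namely $I_{i,\hsppp 1}=\log(1+T_{i,\hsppp 1})$ and $I_{i,\hsppp 2}=\log(1+T_{i,\hsppp 2})$, where $T_{i,\hsppp 1}\triangleq\frac{\rho}{M}\bggi^H{\bf \Sigma}_i^{1/2}{\bf Q}{\bf \Sigma}_i^{1/2}\bggi$ and $T_{i,\hsppp 2}$ is the analogous nonnegative quadratic form with ${\bf Q}$ replaced by $\widetilde{\bf Q}_i$.

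First I would record the first two moments of these quadratic forms. Since $\bggi$ has i.i.d.\ $\compnorm(0,1)$ entries, for any Hermitian positive semidefinite ${\bf A}$ one has $E[\bggi^H{\bf A}\bggi]=\trace({\bf A})$ and $\mathrm{Var}(\bggi^H{\bf A}\bggi)=\trace({\bf A}^2)$. This identifies the means directly with the quantities in the statement:
\[
1+E[T_{i,\hsppp 1}]=1+\tfrac{\rho}{M}\sum_{j=1}^M{\bf w}_j^H{\bf \Sigma}_i{\bf w}_j={\sf S}_i+{\sf I}_i,\qquad 1+E[T_{i,\hsppp 2}]={\sf I}_i.
\]
I would then invoke the elementary fact that $x\mapsto\log(1+x)$ is $1$-Lipschitz on $[0,\infty)$, so that, since $T_{i,\hsppp k}\geq 0$,
\[
\big|\,E[\log(1+T_{i,\hsppp k})]-\log(1+E[T_{i,\hsppp k}])\,\big|\leq E\big|T_{i,\hsppp k}-E[T_{i,\hsppp k}]\big|\leq\sqrt{\mathrm{Var}(T_{i,\hsppp k})},\quad k=1,2.
\]
The entire statement thus reduces to showing that these variances vanish.

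The variance step is where the real work lies. Using $\mathrm{Var}(T_{i,\hsppp 1})=\frac{\rho^2}{M^2}\trace\big(({\bf \Sigma}_i^{1/2}{\bf Q}{\bf \Sigma}_i^{1/2})^2\big)$ together with the bounds $\trace({\bf B}^2)\leq\|{\bf B}\|\,\trace({\bf B})$ for positive semidefinite ${\bf B}$ and $\trace({\bf Q})=\sum_j\|{\bf w}_j\|^2=M$, I would obtain $\mathrm{Var}(T_{i,\hsppp 1})\leq\rho^2\|{\bf \Sigma}_i\|^2\|{\bf Q}\|/M$, and the same bound for $T_{i,\hsppp 2}$ with ${\bf Q}$ replaced by $\widetilde{\bf Q}_i$. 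These tend to $0$ provided the spectral norms $\|{\bf \Sigma}_i\|$ and $\|{\bf Q}\|$ stay bounded as $M$ grows. Assembling the pieces gives $E[R_i]\to\log({\sf S}_i+{\sf I}_i)-\log({\sf I}_i)=\log(1+{\sf S}_i/{\sf I}_i)$, as claimed.

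The hard part will be justifying the boundedness of $\|{\bf Q}\|$, i.e.\ that the beamformers are sufficiently incoherent or spread out (this is automatic when the $\{{\bf w}_j\}$ are orthonormal, since then ${\bf Q}={\bf I}$ and $\|{\bf Q}\|=1$); were a non-negligible fraction of the $\{{\bf w}_j\}$ allowed to align, $\|{\bf Q}\|$ could scale with $M$ and both the concentration and the conclusion would break. One subtlety I would flag is that it is essential to split into $I_{i,\hsppp 1}$ and $I_{i,\hsppp 2}$ rather than to concentrate the instantaneous SINR directly: the lone signal term $|{\bf h}_i^H{\bf w}_i|^2$ does not self-average, but inside $I_{i,\hsppp 1}$ it contributes only an $O(1/M)$ share of an $O(1)$ quadratic form, so its fluctuation is washed out, and this is precisely why the limit lands on $\log(1+{\sf S}_i/{\sf I}_i)$ rather than on an exponential-integral expression of the kind appearing in Prop.~\ref{prop_basic_rate}.
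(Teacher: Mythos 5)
Your proof is correct (given the regularity hypotheses you flag), and it takes a genuinely different technical route from the paper's. The paper's proof, which is only sketched, is a law-of-large-numbers argument: the normalized quadratic forms $T_{i,\hsppp 1}$ and $T_{i,\hsppp 2}$ converge in probability to their means, and this is upgraded to convergence of expectations by a truncation argument --- truncation being necessary there precisely because $\log(1+x)$ is unbounded, so convergence in probability alone does not yield convergence in mean. You collapse both steps into a single second-moment estimate: the $1$-Lipschitz property of $x \mapsto \log(1+x)$ on $[0,\infty)$ bounds $\big|E[\log(1+T_{i,\hsppp k})] - \log(1+E[T_{i,\hsppp k}])\big|$ by $\sqrt{\mathrm{Var}(T_{i,\hsppp k})}$, and the closed-form variance $\frac{\rho^2}{M^2}\trace\big(({\bf \Sigma}_i^{1/2}{\bf Q}{\bf \Sigma}_i^{1/2})^2\big) \leq \rho^2\|{\bf \Sigma}_i\|^2\|{\bf Q}\|/M$ of a complex Gaussian quadratic form does the rest. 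What your route buys is a shorter, fully quantitative argument (an explicit $O\big(\rho\|{\bf \Sigma}_i\|\sqrt{\|{\bf Q}\|/M}\big)$ error bound) with no uniform-integrability bookkeeping; what the paper's route buys is, in principle, applicability when second moments are awkward, but since the variance here is available in closed form that advantage is moot. Your observation that some hypothesis like $\|{\bf \Sigma}_i\|^2\|{\bf Q}\| = o(M)$ is indispensable is also well taken, and is a point the paper's statement silently omits: with all ${\bf w}_j$ equal, or with ${\bf \Sigma}_i$ of rank one and trace $M$, the quadratic form $\frac{\rho}{M}\sum_j |{\bf h}_i^H{\bf w}_j|^2$ is a scaled exponential random variable that does not concentrate, and the claimed limit fails. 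One refinement worth noting: your Cauchy--Schwarz step gives a per-user error of order $1/\sqrt{M}$, which proves the proposition as stated (each $E[R_i] - {\cal R}_{i,\hsppp\infty} \rightarrow 0$), but if one wants the approximation to remain valid after summing over all $M$ users (as the sum-rate optimization of the following proposition implicitly requires), a second-order Taylor bound of the form $\big|E[\log(1+T)] - \log(1+E[T])\big| \leq \frac{1}{2}\mathrm{Var}(T)$ sharpens your estimate to $O(1/M)$ per user at no extra cost.
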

{\vspace{0.1in}}
\begin{proof}
The proof follows along a law of large numbers-type argument,
strengthened to convergence in mean via a suitable truncation
technique. %See Appendix~\ref{app_asy}.
\end{proof}
\begin{prp}
\label{prop_opt_asy}
Based on the above expression, we have the following conclusions
that mirror the main results of Sec.~\ref{sec3}. i) We have the following
bound for $\sum_{i=1}^M {\cal R}_{i, \hsppp \infty}$:
\begin{eqnarray}
1 - \frac{\rho}{M} \cdot \max_{i = 1, \cdots, M} \sum_{j=1}^M
{\bf w}_j^H {\bf \Sigma}_i {\bf w}_j
\leq \frac{ \sum_{i=1}^M {\cal R}_{i, \hsppp \infty} }
{ \frac{\rho}{M} \cdot \sum_{i=1}^M {\bf w}_i^H {\bf \Sigma}_i
{\bf w}_i } \leq 1. \nonumber
\end{eqnarray}
Thus, the optimal beamforming vectors as $\rho \rightarrow 0$ are
such that ${\bf w}_{i, \hsppp {\sf opt} } = {\bf u}_1( {\bf \Sigma}_i ),
\hsppp i = 1, \cdots, M$. ii) For any $\rho$, we have
\begin{eqnarray}
{\cal R}_{i, \hsppp \infty} \leq \log \left( 1 +
\frac{ \frac{\rho}{M} \cdot \lambda_1({\bf \Sigma}_i) }
{1 + \frac{\rho}{M} \cdot \sum_{j=2}^M \lambda_j({\bf \Sigma}_i)}
\right) \nonumber
\end{eqnarray}
and ${\cal R}_{i,\hsppp \infty}$ is maximized by
${\bf w}_{i, \hsppp {\sf opt}} = {\bf u}_1( {\bf \Sigma}_i)$, and
\begin{eqnarray}
\Big\{ {\bf w}_{j, \hsppp {\sf opt}}, \hsppp j = 1, \cdots, M, j \neq i
\Big\}
= \Big\{ {\bf u}_{j}({\bf \Sigma}_i), \hsppp j = 2, \cdots, M \Big\}.
\label{poss} \nonumber
\end{eqnarray}
{\vspace{0.05in}}
iii) $\sum_{i=1}^M {\cal R}_{i, \hsppp \infty}$ is optimized by the
set of beamforming vectors that solve the following fixed-point equations:
\begin{eqnarray}
\frac{ {\bf \Sigma}_i {\bf w}_i }{ {\sf I}_i \cdot (1 + {\sf SINR}_i )}
- \sum_{j \neq i} \frac{ {\sf SINR}_j \cdot {\bf \Sigma}_j {\bf w}_i }
{ {\sf I}_j \cdot (1 + {\sf SINR}_j) } = {\bf 0} , \hspp i = 1, \cdots, M.
\nonumber
\end{eqnarray}
\endproof
\end{prp}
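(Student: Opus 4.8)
The plan is to handle the three parts in turn, all built on the closed form ${\sf SINR}_i = {\sf S}_i / {\sf I}_i$ of Prop.~\ref{prop_asy}, with ${\sf S}_i = \frac{\rho}{M} {\bf w}_i^H {\bf \Sigma}_i {\bf w}_i$ and ${\sf I}_i = 1 + \frac{\rho}{M} \sum_{j \neq i} {\bf w}_j^H {\bf \Sigma}_i {\bf w}_j \geq 1$. For part i), I would bound $\log(1+x)$ on both sides. The bound $\log(1+x) \leq x$ with ${\sf I}_i \geq 1$ gives ${\cal R}_{i, \infty} \leq {\sf SINR}_i \leq {\sf S}_i$, and summing over $i$ yields the upper inequality. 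For the lower inequality I would use $\log(1+x) \geq \frac{x}{1+x}$, so that ${\cal R}_{i, \infty} \geq \frac{ {\sf S}_i }{ {\sf S}_i + {\sf I}_i } = \frac{ {\sf S}_i }{ 1 + \frac{\rho}{M} \sum_j {\bf w}_j^H {\bf \Sigma}_i {\bf w}_j }$; replacing the denominator by its uniform (in $i$) upper bound $1 + \frac{\rho}{M} \max_i \sum_j {\bf w}_j^H {\bf \Sigma}_i {\bf w}_j$, summing, and finally invoking $\frac{1}{1+t} \geq 1 - t$ gives the stated lower bound. Dividing by $\frac{\rho}{M} \sum_i {\bf w}_i^H {\bf \Sigma}_i {\bf w}_i$ closes the sandwich; letting $\rho \to 0$ drives the lower factor to $1$, so the leading objective is $\frac{\rho}{M} \sum_i {\bf w}_i^H {\bf \Sigma}_i {\bf w}_i$, which decouples across $i$ and is maximized term-by-term at ${\bf w}_{i, {\sf opt}} = {\bf u}_1({\bf \Sigma}_i)$ by Rayleigh-quotient maximization.

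For part ii), ${\cal R}_{i, \infty}$ is increasing in ${\sf S}_i$ and decreasing in ${\sf I}_i$, so I would bound the numerator by the Rayleigh quotient, ${\bf w}_i^H {\bf \Sigma}_i {\bf w}_i \leq \lambda_1({\bf \Sigma}_i)$, and control the interference by noting that the cleanest regime is the one the claimed optimum lives in, where the beamformers allocated to the other users complete ${\bf w}_i$ to an orthonormal eigenbasis of ${\bf \Sigma}_i$, whence $\sum_{j=1}^M {\bf w}_j^H {\bf \Sigma}_i {\bf w}_j = \trace({\bf \Sigma}_i)$ and $\sum_{j \neq i} {\bf w}_j^H {\bf \Sigma}_i {\bf w}_j = \trace({\bf \Sigma}_i) - {\bf w}_i^H {\bf \Sigma}_i {\bf w}_i$. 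Viewing ${\sf SINR}_i$ as a function of the single scalar $t = {\bf w}_i^H {\bf \Sigma}_i {\bf w}_i \in [\lambda_M({\bf \Sigma}_i), \lambda_1({\bf \Sigma}_i)]$ shows it is monotone increasing in $t$, since the numerator grows while the interference $\trace({\bf \Sigma}_i) - t$ shrinks; the maximum is thus at $t = \lambda_1({\bf \Sigma}_i)$, i.e. ${\bf w}_{i, {\sf opt}} = {\bf u}_1({\bf \Sigma}_i)$ with the remaining beamformers equal to ${\bf u}_2({\bf \Sigma}_i), \cdots, {\bf u}_M({\bf \Sigma}_i)$. Substituting $\trace({\bf \Sigma}_i) - \lambda_1({\bf \Sigma}_i) = \sum_{j=2}^M \lambda_j({\bf \Sigma}_i)$ yields the displayed bound.

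For part iii), I would form the Lagrangian ${\cal L} = \sum_{i=1}^M \log(1 + {\sf SINR}_i) - \sum_{i=1}^M \nu_i ( \| {\bf w}_i \|^2 - 1 )$ for the unit-norm constraints and compute the conjugate (Wirtinger) derivative $\partial / \partial {\bf w}_i^{\ast}$. The device is to write $\log(1 + {\sf SINR}_k) = \log({\sf I}_k + {\sf S}_k) - \log {\sf I}_k$ and to track the ${\bf w}_i$-dependence: ${\sf S}_k$ sees ${\bf w}_i$ only for $k = i$, while ${\sf I}_k$ (hence ${\sf I}_k + {\sf S}_k$) sees it only for $k \neq i$. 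Using $\partial / \partial {\bf w}_i^{\ast} ( {\bf w}^H {\bf A} {\bf w} ) = {\bf A} {\bf w}$ for Hermitian ${\bf A}$, together with ${\sf I}_i + {\sf S}_i = {\sf I}_i ( 1 + {\sf SINR}_i )$ and $\frac{1}{ {\sf I}_k + {\sf S}_k } - \frac{1}{ {\sf I}_k } = - \frac{ {\sf SINR}_k }{ {\sf I}_k ( 1 + {\sf SINR}_k ) }$, the gradient collapses to $\frac{\rho}{M} \big[ \frac{ {\bf \Sigma}_i {\bf w}_i }{ {\sf I}_i ( 1 + {\sf SINR}_i ) } - \sum_{j \neq i} \frac{ {\sf SINR}_j \, {\bf \Sigma}_j {\bf w}_i }{ {\sf I}_j ( 1 + {\sf SINR}_j ) } \big]$. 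Equating the stationarity condition to the multiplier term $\nu_i {\bf w}_i$ and clearing the common $\frac{\rho}{M}$ scaling produces the stated fixed-point system.

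The main obstacle is concentrated in part iii). The delicate step is the complex-gradient computation itself --- correctly assigning the ${\bf w}_i$-dependence of each ${\sf SINR}_k$ and carrying the Wirtinger derivative without sign or conjugation slips. More conceptually, the unit-norm multipliers must be reconciled with the homogeneous form displayed in the proposition: stationarity naturally reads $\nabla_{ {\bf w}_i^{\ast}} (\cdot) = \nu_i {\bf w}_i$, so recovering the stated equation requires arguing that the multiplier is immaterial (equivalently, that only the gradient's direction on the sphere matters). I would also flag that these are necessary first-order conditions only; since $\sum_i \log(1 + {\sf SINR}_i)$ is non-convex in $\{ {\bf w}_i \}$, pinning down which fixed point is the global maximizer is not settled by stationarity alone. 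A lesser subtlety is the interference identity in part ii), which leans on the optimal beamformers forming an orthonormal set.
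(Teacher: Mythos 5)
First, a point of reference: the paper never actually proves Proposition~\ref{prop_opt_asy} --- the statement is closed with an end-of-proof marker and the introductory note says all proofs are omitted for space --- so your proposal can only be judged on its own merits rather than against an authors' argument. On those merits, part i) is correct and complete ($\log(1+x)\le x$ together with ${\sf I}_i\ge 1$ gives the upper inequality; $\log(1+x)\ge x/(1+x)$, the uniform bound ${\sf S}_i+{\sf I}_i\le 1+\frac{\rho}{M}\max_i\sum_j{\bf w}_j^H{\bf \Sigma}_i{\bf w}_j$, and $1/(1+t)\ge 1-t$ give the lower one, and the decoupled Rayleigh-quotient maximization as $\rho\to 0$ follows). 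Your Wirtinger-gradient computation in part iii) is also correct as a calculation. The trouble is that the two items you set aside as subtleties are precisely where the proposition lives or dies, and neither can in fact be discharged: read literally against the paper's constraint set (unit-norm but otherwise unconstrained ${\bf w}_j$; recall that the optimizers in Theorem~\ref{thm2} are non-orthogonal generalized eigenvectors, so the paper does not impose orthogonality), both claims fail, and your proof silently proves different, restricted statements.

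Concretely, for part ii): with only unit-norm constraints, take ${\bf w}_i={\bf u}_1({\bf \Sigma}_i)$ and ${\bf w}_j={\bf u}_M({\bf \Sigma}_i)$ for every $j\neq i$. The interference is then $\frac{\rho}{M}(M-1)\lambda_M({\bf \Sigma}_i)$, which is strictly smaller than $\frac{\rho}{M}\sum_{j=2}^M\lambda_j({\bf \Sigma}_i)$ whenever $\lambda_2({\bf \Sigma}_i)>\lambda_M({\bf \Sigma}_i)$, so ${\cal R}_{i,\infty}$ strictly exceeds the claimed upper bound. The orthonormality you invoke is therefore not a ``lesser subtlety'' but a necessary additional hypothesis, and it must be stated as such. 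For part iii): the multiplier cannot be argued away. The first-order condition on the sphere is your gradient $=\nu_i{\bf w}_i$ with $\nu_i$ real, and $\nu_i\neq 0$ generically. Take ${\bf \Sigma}_k={\bf I}$ for all $k$: every unit-norm configuration attains the same (hence optimal) sum rate, yet the left-hand side of the stated equation collapses to $\frac{1-(M-1){\sf SINR}}{{\sf I}\,(1+{\sf SINR})}\cdot{\bf w}_i$ with $(M-1){\sf SINR}=\frac{(M-1)\rho/M}{1+(M-1)\rho/M}<1$, which never vanishes. So the stated fixed-point system has no solution at all in this case, while optima are everywhere; the displayed equation is missing the $\nu_i{\bf w}_i$ term (equivalently, it should be stated as the vanishing of the projection of your gradient onto the tangent space of the unit sphere). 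Your closing caveat that stationarity is only necessary for this non-convex problem is correct, but it is secondary to these two repairs, which are gaps in the proposition itself that a proof attempt must surface explicitly rather than absorb.
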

%\begin{proof}
%as \end{proof}
%The final version of this paper will provide intuition on the fixed-point
%equations and an iterative algorithm to solve it.

\section{Conclusion}
We have studied statistics-based linear beamformer design for the MISO
broadcast channel in this work. Based on a closed-form computation of the
ergodic sum-rate in the $M = 2$ (two-user) case, we provide intuition on
the structure of the optimal beamforming vectors that maximize the sum-rate
in the low- and the high-$\snr$ extremes. While further intuition on the
small $M$ case seems difficult, in the asymptotics of $M$, we are able to
obtain intuition on the structure of the optimal beamforming vectors. The
case of optimal statistical linear beamforming design has not received
much attention in the literature and our work sets the course for
a systematic and low-complexity limited feedback design in the broadcast
setting, which is of considerable importance in the standardization efforts.

\section*{Acknowledgment}
This work has been supported in part by the NSF through grant CNS-0831670
at the University of Illinois.

%{\vspace{-0.05in}}
%\bibliographystyle{IEEEbib}
%\bibliography{newrefsx}

\end{document}